\def\maxwidth{ %
  \ifdim\Gin@nat@width>\linewidth
    \linewidth
  \else
    \Gin@nat@width
  \fi
}
\definecolor{fgcolor}{rgb}{0.345, 0.345, 0.345}
\definecolor{shadecolor}{rgb}{.97, .97, .97}
\definecolor{messagecolor}{rgb}{0, 0, 0}
\definecolor{warningcolor}{rgb}{1, 0, 1}
\definecolor{errorcolor}{rgb}{1, 0, 0}
\newcommand*{\addFileDependency}[1]{
  \typeout{(#1)}
  \@addtofilelist{#1}
  \IfFileExists{#1}{}{\typeout{No file #1.}}
}
\DeclareMathOperator*{\var}{var}  
\DeclareMathOperator*{\tr}{tr}
\DeclareMathOperator*{\cov}{Cov}
\DeclareMathOperator*{\argmin}{argmin} 
\DeclareMathOperator*{\argzero}{argzero}
\DeclareMathOperator*{\Int}{Int}
\def\real{{\rm I\!R}}
\def\bvartheta{\bm{\vartheta}}
\def\btheta{\bm{\theta}}
\def\hbtheta{\hat{\btheta}}
\def\bnu{\bm{\nu}}
\def\hbnu{\hat{\bnu}}
\def\boxit#1{\vbox{\hrule\hbox{\vrule\kern3pt
          \vbox{\kern3pt#1\kern3pt}\kern3pt\vrule}\hrule}}
\newtheoremstyle{mytheoremstyle} 
    {0.3cm}                      
    {0cm}                        
    {\itshape}                   
    {}                           
    {\scshape}                   
    {: }                          
    {0em}                       
    {}  
\theoremstyle{mytheoremstyle}
\newtheorem{Theorem}{Theorem}
\newtheorem{Lemma}{Lemma}
\newtheorem{Corollary}{Corollary}
\newtheorem{Assumption}{Assumption}
\newtheorem{Proposition}{Proposition}
\newtheorem{Remark}{Remark}
\renewenvironment{proof}{{\noindent \sc Proof:}}{\qed}
\newtheoremstyle{myExampleRemarkstyle} 
    {0.3cm}                    
    {0cm}                           
    {\itshape}                   
    {}                           
    {\scshape}                   
    {: }                          
    {0em}                       
    {}  
\theoremstyle{myExampleRemarkstyle}
\newtheoremstyle{simuStyle}
{0.3cm} 
{0cm} 
{} 
{} 
{\bfseries} 
{.} 
{0em} 
{} 
\theoremstyle{simuStyle}
\newtheoremstyle{stratStyle}
{0.3cm} 
{0cm} 
{} 
{} 
{\scshape} 
{: } 
{0em} 
{} 
\theoremstyle{stratStyle}
\DeclareSymbolFont{lettersA}{U}{txmia}{m}{it}
\DeclareMathSymbol{\field}{\mathord}{lettersA}{"83}
\begin{document}

\title{Multi-Signal Approaches for Repeated Sampling Schemes in Inertial Sensor Calibration}

\author{Gaetan~Bakalli,
		Davide~A.~Cucci,
        Ahmed~Radi,
        Naser~El-Sheimy,
		Roberto~Molinari,
		Olivier Scaillet and~St\'ephane~Guerrier

\thanks{G. Bakalli is with the Department of Mathematics \& Statistics, Auburn University, Auburn, AL 36849, USA (e-mail: gaetan.bakalli@auburn.edu).}
\thanks{D. Cucci is with the Geneva School of Economics and Management, University of Geneva, 1205, Switzerland (e-mail: davide.cucci@unige.ch).}
\thanks{A. Radi is with the Technical Researches Center, Cairo, Egypt (e-mail: ahmed.elboraee@ucalgary.ca).}
\thanks{N. El-Sheimy is with the Department of Geomatics Engineering, University of Calgary, Calgary, Alberta T2N 1N4, Canada (e-mail: elsheimy@ucalgary.ca).}
\thanks{R. Molinari is with the Department of Mathematics \& Statistics, Auburn University, Auburn, AL 36849, USA (e-mail: robmolinari@auburn.edu).}
\thanks{O. Scaillet is with the Geneva Finance Research Institute, University of Geneva and Swiss Finance Institute, Geneva 1211, Switzerland (e-mail: olivier.scaillet@unige.ch).}
\thanks{S. Guerrier is with the Faculty of Science \& Geneva School of Economics and Management, University of Geneva, 1205, Switzerland.  (e-mail: stephane.guerrier@unige.ch).}}

\maketitle
\begin{abstract}
Inertial sensor calibration plays a progressively important role in many areas of research among which navigation engineering. By performing this task accurately, it is possible to significantly increase general navigation performance by correctly filtering out the deterministic and stochastic measurement errors that characterize such devices. While different techniques are available to model and remove the deterministic errors, there has been considerable research over the past years with respect to modelling the stochastic errors which have complex structures. In order to do the latter, different replicates of these error signals are collected and a model is identified and estimated based on one of these replicates. While this procedure has allowed to improve navigation performance, it has not yet taken advantage of the information coming from all the other replicates collected on the same sensor. However, it has been observed that there is often a change of error behaviour between replicates which can also be explained by different (constant) external conditions under which each replicate was taken. Whatever the reason for the difference between replicates, it appears that the model structure remains the same between replicates but the parameter values vary. In this work we therefore consider and study the properties of different approaches that allow to combine the information from all replicates considering this phenomenon, confirming their validity both in simulation settings and also when applied to real inertial sensor error signals. By taking into account parameter variation between replicates, this work highlights how these approaches can improve the average navigation precision as well as obtain reliable estimates of the uncertainty of the navigation solution.
\end{abstract}

\begin{IEEEkeywords}
Generalized Method of Wavelet Moments, Inertial Sensor Calibration, Stochastic Error, Extended Kalman Filter, Navigation
\end{IEEEkeywords}

%
\IEEEpeerreviewmaketitle

\section{Introduction}
\label{sec:intro}
%
%
%
%

\IEEEPARstart{I}{nertial} sensors are ubiquitous in modern navigation systems, with applications ranging from space missions, aviation and drones, to personal navigation in smartphones. They provide high-frequency and short-term precise information on the orientation and velocity change of the platform they are placed on. Inertial measurements are typically integrated with other sources to obtain estimates of the platform position and orientation in space. Examples are Global Navigation Satellite Systems within strap-down inertial navigation~\cite{titterton2004strapdown} and cameras for visual-inertial systems~\cite{huang2019visual}.

Inertial sensors, like any other sensor, have errors that are both deterministic and stochastic. Deterministic errors such as the stable parts of biases, scale factors and non-orthogonality of the axes can be pre-calibrated and removed from the measurements directly. The additive stochastic part of error can only be taken into account ``on-flight'' within the estimation process to serve two main purposes: i) estimation of the time-correlated part of those stochastic errors (to remove them from the measurements and improve navigation accuracy~\cite{wall2005characterization}) and, ii) estimation of uncertainty associated with the navigation states, such as position and orientation.  This requires proper modeling of the stochastic errors of the sensors, often referred to as ``stochastic calibration''.
This task is generally performed in a black-box fashion on a device-per-device basis, acquiring long series of static measurements which are composed by the stochastic error itself, plus constant terms such as gravity and the Earth rotation rate which can be easily removed. Stochastic calibration of inertial sensors has been widely studied in the last decades and various methods are available for this, going from power spectral density analysis \cite{allen1993performance,board1998ieee} to the correlation of filtered sensor outputs \cite{yuksel2010error}. The majority of these methods aim at decomposing these stochastic signals and/or performing system identification procedures to model them \cite{claus1993multiscale, johansson1999stochastic}. The most commonly employed techniques are, for example, those based on Maximum-Likelihood Estimation \cite{nikolic2015maximum,yuksel2011notes} (hereinafter MLE) or the Allan variance~\cite{allan1966statistics} (hereinafter AV), the latter having been initially conceived for the characterization of phase and frequency instability of precision oscillators. The AV approach consists in multiple separate regressions on the linear segments of the AV plots in order to recover the underlying parameters of interest for the stochastic error signal and represents the de-facto standard for inertial sensor stochastic modeling~\cite{ieee1998ieee}. For a detailed discussion, see~\cite{el2007analysis}.
However, the AV plot is a graphical device that requires a manual inspection and is consequently sensitive to the user's proficiency as well as being burdened with many theoretical limitations including significant (asymptotic) bias in the estimated parameters of the postulated stochastic model~\cite{guerrier2020wavelet, guerrier2016theoretical}. 

To overcome the limitations of the AV approach as well as the important computational limitations of MLE techniques, the Generalized Method of Wavelet Moments~(GMWM) was proposed in~\cite{guerrier2013wavelet} and makes use of the quantity called Wavelet Variance (WV) that, in specific settings, is equivalent to the AV up to a constant. Using a matching technique, the WV allows to easily recover the parameters of the postulated stochastic model providing a statistically appropriate and computationally feasible technique for stochastic calibration of inertial sensors.
More in detail, the intuition behind the GMWM can be described via Fig.~\ref{fig:intro_plot} which represents the log-log plot of the estimated Haar WV (equal to the AV up to a constant) for a simulated stochastic error signal (blue line) with its $95\%$ confidence intervals (shaded light-blue area). The first scales of the WV are driven by a White Noise (WN) process (or Angular Random Walk, in gyroscopes), while the elbow between scales $2^{10}$ to $2^{13}$ is mainly defined by an Auto-Regressive process of order 1 (AR1) which consists in a reparametrization of a Gauss-Markov process. Finally, the larger scales highlight the non-stationary processes, in this case given by a Random Walk (RW) or Rate Random Walk. It can be seen how the individual processes contribute to shaping the WV and the idea of the GMWM (defined more formally further on) is to use this shape and the estimated WV to find the underlying processes and relative parameters by minimizing the distance between the estimated WV and the theoretical one implied by the model.
\begin{figure}
    \centering
    \includegraphics[width=\linewidth]{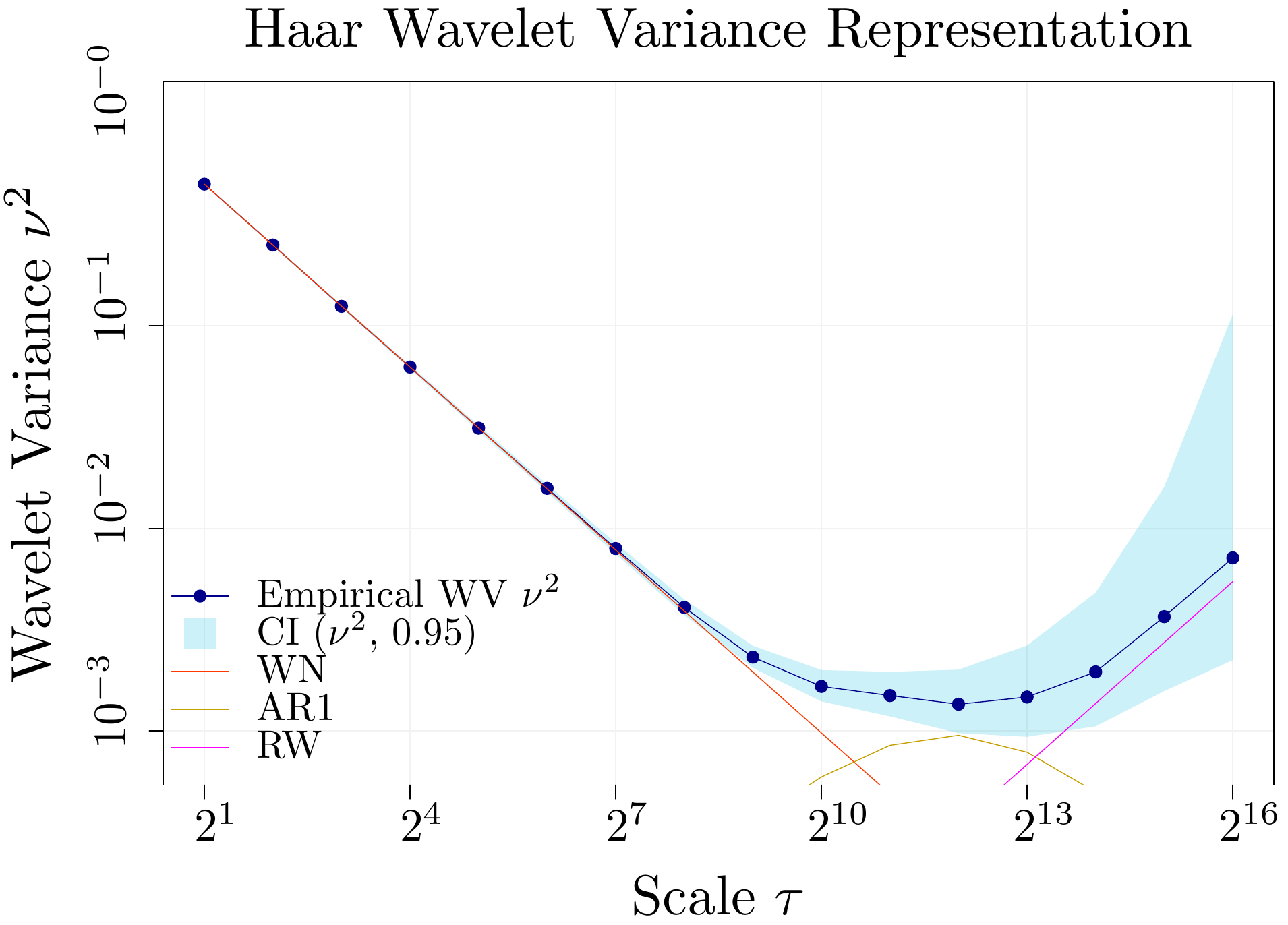}
    \caption{Empirical WV representation (plain blue dotted lines) coming from a synthetic signal simulated from the sum of an WN, AR1 and RW processes. The shaded blue area represents the $95\%$ confidence intervals, while the plain red, yellow and pink lines represent the contribution of the individual processes (WN, AR1 and RW respectively) to the empirical WV.}
    \label{fig:intro_plot}
\end{figure}
\begin{figure*}
\tiny
\begin{minipage}[b]{0.475\linewidth}
\centering
\includegraphics[width=\linewidth]{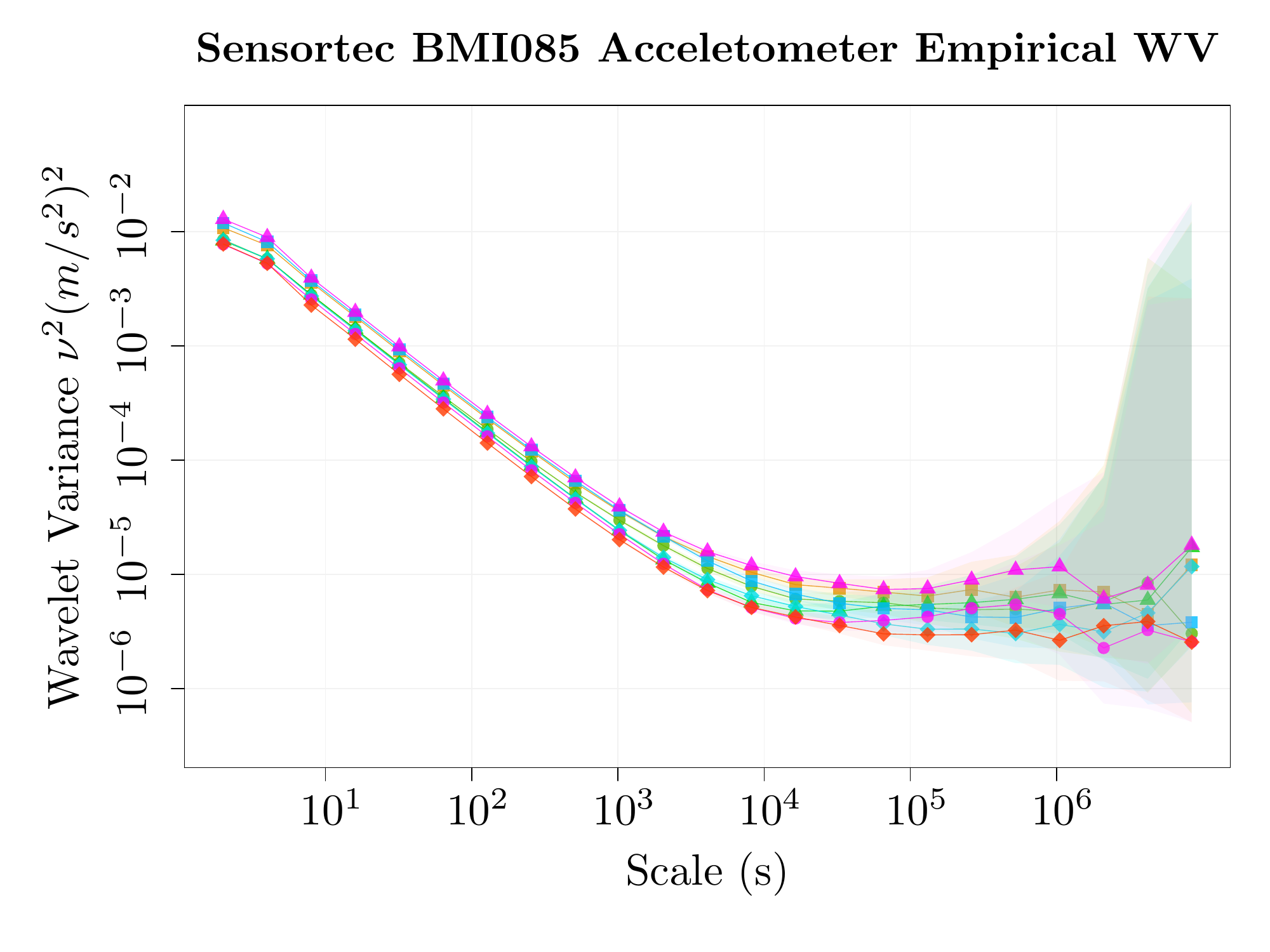}
\end{minipage}
\hspace{.3cm}
\begin{minipage}[b]{0.475\linewidth}
\centering
\includegraphics[width=\linewidth]{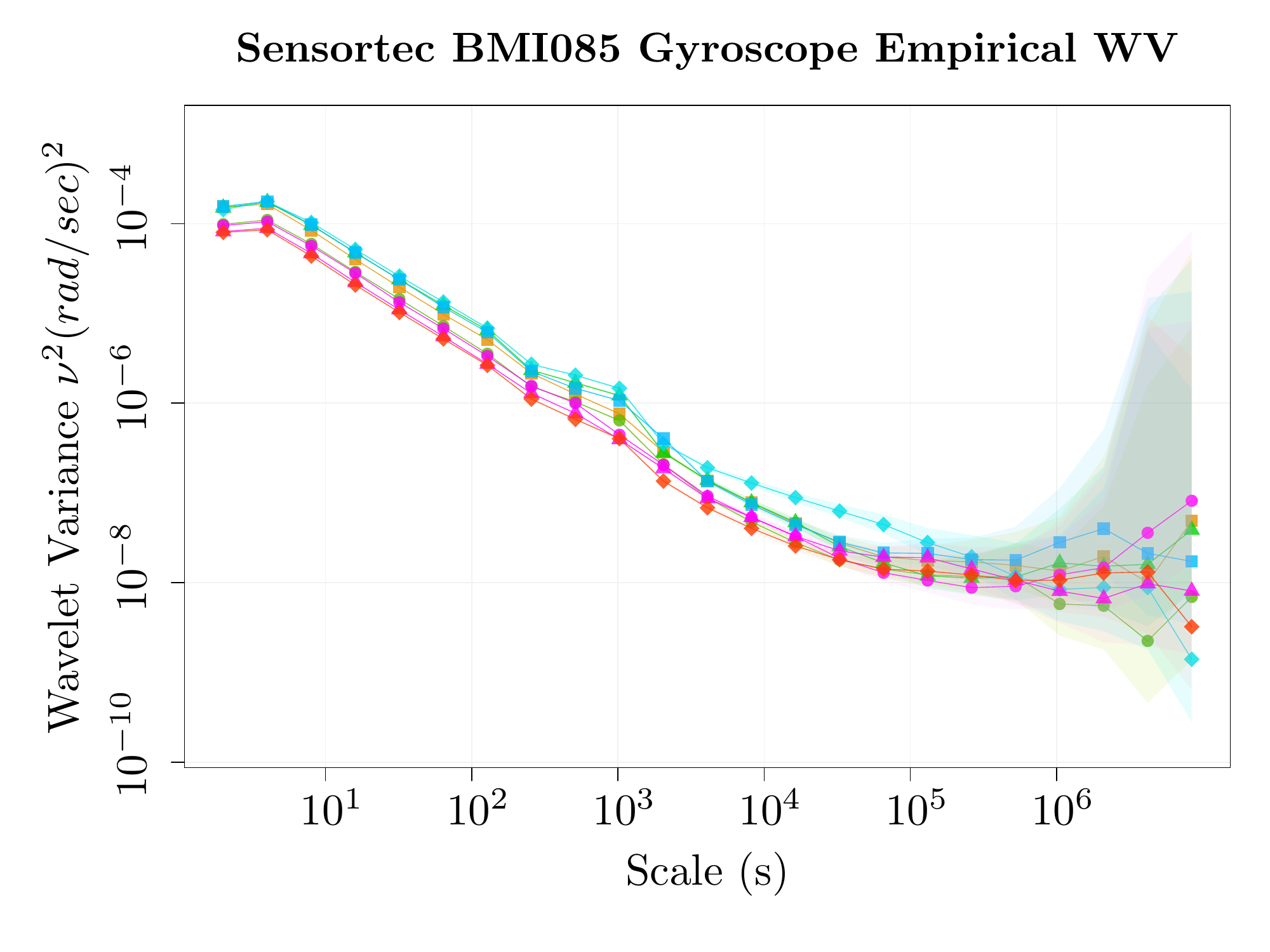}
\end{minipage}\hfill
\caption{Empirical WV (plain doted line) coming from 8 replicates of Bosch Sensortec BMI085 MEMS IMU accelerometer (left plot) and gyroscope (right plot), with their respective $95\%$ confidence intervals (shaded areas). }
\label{fig:emp_hexa}
\end{figure*}

While the GMWM has improved the task of stochastic calibration for Inertial Measurement Units (IMUs), it still relies on the common approach to calibration which consists in modelling and removing stochastic errors measured on a single experimental run (replicate) of IMU measurements in static conditions. Nevertheless, it is also common for IMU calibration procedures to perform several independent experimental runs on the same sensor from which the single signal for calibration is chosen. Although the latter approach remains a valid one, it may not be optimal since the different replicates can contain important information to appropriately model these signals for navigation purposes. Moreover, the choice of which replicate to use for calibration remains somewhat random aside from visually assessing the behavior of the single replicates thereby exposing oneself to the risk of picking a ``sub-optimal'' signal. To avoid this, the problem of considering the information from all replicates was first put forward in \cite{bakalli2017computational} and subsequently in \cite{radi2019multisignal} where it is underlined how the replicates are important to carry out a comprehensive estimation procedure but need to be used with caution due to the changes of the empirical WV between each observed signal. Fig.~\ref{fig:emp_hexa} provides an example of this behavior with eight independent recordings in static conditions coming from a Bosch Sensortec BMI085 MEMS IMU accelerometer and gyroscope. It can be seen how the shape of the WV remains roughly the same between replicates (i.e., the underlying model structure remains constant) but their values differ significantly, mainly over the first scales (i.e., their confidence intervals do not overlap). Building on this observation, \cite{radi2019multisignal} defined this setting as ``\textit{near-stationary}'' where, instead of considering a stochastic model characterized by a fixed parameter vector, they postulate that the parameters of this model are independent random variables that follow a certain stationary probability distribution $G$. Aside from requiring it to be stationary, the distribution $G$ is left unspecified by choice since, while explaining possible parameter variation due to internal sensor characteristics, it can also represent the change in parameter values due to observed and/or unobserved external factors during each calibration run. 

As a result, the near-stationary setting described above can be seen from different perspectives. From a Bayesian perspective the distribution $G$ would represent a prior distribution on the parameters \cite{lee1989bayesian} while from a random-coefficient model perspective (or mixed-model perspective) \cite{verbeke1997linear}, the parameter values of each replicate could be seen as random effects for each subject (in this case each replicate). However, in both cases, these approaches would require the exact distributional specification of $G$ in order to estimate the ``optimal'' or ``average'' parameter value. Another perspective would be a time-varying parameter model where the parameters are seen as evolving over time (between replicates), possibly due to internal or external factors. Also in this case though, a specific (parametric) model and corresponding factors would need to be specified in order to explain how these parameters evolve over time. With respect to these approaches, the setting considered in \cite{bakalli2017computational} and \cite{radi2019multisignal} (and hence in this work) takes on a semi-parametric perspective where the error signals follow a parametric model $F_{\bvartheta}$ where $\bvartheta \sim G$, with the distribution $G$ remaining unspecified. As in the Bayesian or random-coefficient framework, the goal in this setting remains to define (and estimate) the ``optimal'' parameter value that best summarizes the behavior of the random parameter $\bvartheta$ to better characterize and predict the stochastic error signals of an inertial sensor.

To address this problem, \cite{bakalli2017computational} and \cite{radi2019multisignal} put forward a solution that extends the definition of the GMWM in order to combine the information from each replicate in an adequate manner. This extension implicitly defines the parameter vector which is consequently used to best represent the overall parameter behaviour (considering that it is now a random variable) and to perform navigation updates. However, they also consider another intuitive estimator which implies that the parameter value to target is possibly different (as explained later on in this work). Given this preliminary research, this work intends to define three different solutions that can be considered for this problem (two of which are those put forward in \cite{bakalli2017computational}) and study their properties in order to clearly determine which method could be used for multi-signal calibration and under what circumstances. In addition, these properties allow to improve the understanding and performance of statistical inferential tools that can be used to assess the validity of the near-stationary hypothesis when dealing with multiple signal replicates. Moreover, these approaches remain valid for general signal/time-series problems based on moment-matching techniques and can therefore be employed in similar circumstances beyond the scope of IMU calibration.

To present and discuss the proposed approaches and results, this paper is organized as follows. Sec.~\ref{sec:ms} formally defines the near-stationary framework and discusses the properties of the different approaches considered for multi-signal calibration. These results are necessary to obtain reliable statistical estimates for navigation purposes and are confirmed in Sec.~\ref{sec:simu} which studies the finite sample performance of the three proposed approaches in a controlled simulation setting. In order to highlight the advantages of these approaches for inertial sensor calibration, Sec.~\ref{sec:case2} presents a case study on real-world inertial sensor calibration error signals which shows how the proposed approaches can generally improve the navigation performance with respect to the current setting where only one replicate is used to calibrate the inertial sensors and feed the navigation filter. Finally, Sec. \ref{sec:conclusion} concludes.

\section{Multi-Signal Calibration}
\label{sec:ms}

In this section, we present the theoretical framework of this work more formally and study the different proposed solutions for the considered setting. More specifically, in the following sections, we describe and study the solutions, including those put forward in  \cite{bakalli2017computational} and \cite{radi2019multisignal}, which are a direct extension of the GMWM. As mentioned, the latter is currently employed, among others, for sensor calibration on a \textit{single} stochastic error signal issued from an inertial sensor calibration session (see, e.g., \cite{guerrier2013wavelet,guerrier2020robust}). Indeed, in order to estimate the parameter vector ($\btheta \in \real^p$) that characterizes the model underlying the stochastic error, denoted as $F_{\btheta}$, the GMWM is defined as follows:
\begin{equation}
   \tilde{\btheta} := \argmin_{\btheta \in \bm{\Theta}} \| \hbnu - {\bnu}(\btheta)\|^2_{\bm{\Omega}},
\end{equation}
where, with $\mathbf{Z} \in \real^J$, we have that $\|\mathbf{Z}\|_{\bm{\Omega}}^2 := \mathbf{Z}^{\intercal}\bm{\Omega}\mathbf{Z}$. In addition, $\hat{\bnu} \in \real_+^J$ represents the WV estimated on the single error signal issued from the calibration session, $\bnu(\btheta) \in \real_+^J$ represents the theoretical WV implied by the parametric model $F_{\btheta}$ and $\bm{\Omega}$ is a positive definite weighting matrix chosen in a suitable way (see, e.g., \cite{guerrier2020robust} and following sections for more details).

\subsection{Near-Stationary Framework}

\noindent Compared to the setting where a single error signal is considered, a common practice for inertial sensor calibration is to independently record $K > 1$ replicates of the error signals issued from the same IMU in static conditions. Ideally, each signal (or replicate), indexed by $i$ and with length $T_i$, is issued from the same stochastic error model with the same fixed parameter values (i.e., $\btheta_i=\btheta_0, \, \forall i$) which are specific to the sensor of interest. However, based on the frequently observed random variations of plots of the WV for each signal measured on the same device (and under the same conditions), it would appear that, while the structure of the stochastic model remains the same, the parameters of the latter model appear to change between replicates. It is therefore more reasonable to assume that the parameters of the model are not fixed but vary from one signal to the next.  We therefore assume that there exists an independent sequence of random variables $\bvartheta_i$ (for $i=1,\ldots,K$), with associated probability distribution that we denote by $G$. We refer to the distribution $G$ as an \textit{internal sensor model} and we define the processes generated by the sensor as \textit{near-stationary} processes, in the sense that the model generating them (which can include non-stationary time series models) remains the same for each signal while the associated parameters change between replicates according to a probability distribution $G$ whose support is defined over a compact set $\bm{\Theta}$.

To formalize this new framework, assuming that all deterministic calibration has removed the corresponding errors (e.g. axis non-orthogonalities, etc.), let us define the $i^{th}$ stochastic error signal as $(X_t^{(i)}) \sim F_{\bvartheta_i}$ where $t = 1. \hdots, T_i$, and $\bvartheta_i \in \bm{\Theta} \subset \real^p$ which is such that $\bvartheta_i \overset{iid}\sim G$. As in the GMWM setting, the model $F_{\bvartheta_i}$ therefore represents the stochastic process governing the dependence structure over time during the $i^{th}$ calibration session, where the distribution of the innovation sequence is left unspecified (i.e., it can be Normal or another continuous distribution). With this setting in mind, we denote the estimator of WV as $\hbnu_i \in \real_+^J$ where $J$  is a \textit{fixed} integer representing the chosen number of WV scales such that $p \leq J \leq \min_i J_i$ where $J_i \in \mathbb{N}^+$ represents the number of WV scales for the $i^{th}$ signal. It must be noticed that now the estimator $\hbnu_i$ does not target a general fixed WV $\bnu(\btheta_0)$ but aims to estimate the WV implied by the random parameter vector that generated the $i^{th}$ replicate, i.e., $\bnu(\bvartheta_i)$.

Considering this new stochastic framework, it would be unreasonable to use the parameter vector estimated on the $i^{th}$  signal to predict the general measurement error of a future signal. As a consequence, it would be more appropriate to define a \textit{fixed} parameter vector that adequately represents and predicts the behaviour of all possible signals issued from the stochastic framework,  where the paramater values $\bvartheta_i$ vary from one replicate to the other. In order to do so, we adopt the parameter notation from the standard setting and define $\btheta_0$ as follows:

\begin{equation*}
    \btheta_0 := \argmin_{\btheta \in \bm{\Theta}} Q\left(\btheta\right),
    \label{eq:theta0}
\end{equation*}
where
\begin{equation}
Q\left(\btheta\right):= \mathbb{E}\left[\|\bnu(\bvartheta_i)  -  \bnu(\btheta)\|^2_{\bm{\Omega}}\right],
    \label{eq:optimality}
\end{equation}
with $\mathbb{E}[\cdot]$ denoting the expectation under the distribution $G$, $\bnu(\btheta)$ representing the theoretical WV implied by the stochastic model evaluated at the fixed parameter vector $\btheta$ and $\bm{\Omega}$ denoting a positive definite weighting matrix. With respect to the weighting matrix, for example, one can choose a \textit{fixed} positive definite matrix for $\bm{\Omega}$ (denoted as $\bm{\Omega}_0$) or, as discussed further on, an estimator of the latter matrix (denoted as $\widehat{\bm{\Omega}}$). As long as this matrix is positive definite and assuming identifiability of the function $\bnu(\cdot)$, the criterion in \eqref{eq:optimality} is always minimized in $\btheta_0$. In an estimation setting, the choice of $\bm{\Omega}$ is usually limited to minimizing the asymptotic variance which is achieved by choosing $\bm{\Omega} := \bm{V}^{-1}$, where $\bm{V}$ is the asymptotic covariance matrix of the estimated WV (see \cite{guerrier2013wavelet, guerrier2020robust}), although a simple diagonal matrix (such as the identity) can often be more than sufficient in practice. With this in mind, the criterion (or loss/objective function) in \eqref{eq:optimality} is an extension of the GMWM objective function which takes into account the internal sensor model $G$. The logic behind choosing this criterion therefore consists in finding a fixed parameter vector $\btheta_0$ that minimizes the expected squared-loss between the WV implied by the latter parameter and the WV implied by all possible values of the (parameter) random variable $\bvartheta_i$. In a Bayesian sense, we are finding the optimal parameter value (according to the GMWM criterion) weighted by the prior distribution $G$ which however does not need to be specified since this expectation is evaluated empirically through observed ``realizations'' or ``representations'' of the distribution $G$ as presented in the following paragraphs.

\subsection{Multi-Signal Approaches}

Given that we cannot directly observe the criterion in \eqref{eq:optimality} that would allow us to find the value of $\btheta_0$, we need to consider estimators for this quantity. For this reason, this work studies different solutions, among which those put forward in \cite{bakalli2017computational} and \cite{radi2019multisignal} whose finite sample performance was investigated through preliminary simulations and applied studies. These solutions are intuitive estimators for the quantity of interest $\btheta_0$ but, as shown further on, have different properties and actually turn out to be the same under specific or more general circumstances. 

However, compared to the solutions put forward in \cite{bakalli2017computational} and \cite{radi2019multisignal}, we define a more general setting where we can assign weights to the information coming from each replicate. More specifically, we define the weights that characterize the studied solutions as follows:
\begin{equation*}
    w_i := d_i\,\frac{T_i}{\sum_{j=1}^K T_j},
    \label{eq:weigths}
\end{equation*}
where $d_i$ is a signal-specific constant defined by the user to give more weight to certain signals based on prior knowledge (one would however commonly choose $d_i = 1$ for all $i$). Based on this definition, conditioned on the choice of $d_i$, these weights are larger for longer signals therefore giving more weight to those signals that carry more information. It must be noticed that, considering the case where $d_i=1\,\,\forall i$, these weights have the following properties
\begin{equation}
    \sum^K_{i = 1} w_i = 1, \,\, w_i \geq 0 \,\, \text{and} \,\, w_i = \mathcal{O}(K^{-1}),
    \label{eq:weigths2}
\end{equation}
which are important to determine the theoretical properties of the estimators studied in the following paragraphs. If $d_i\neq 1$, then they should be chosen such that  \eqref{eq:weigths2} holds.

Considering the multiple signal recording setting formalized in the previous paragraphs, the goal of the methods studied in this work is to combine the information from the different signals in an optimal (weighted) manner. The first and most intuitive way to do so would be to take a simple weighted average of the GMWM estimators issued from the individual signals (we refer to this estimator as the Average GMWM (AGMWM) which was suggested in \cite{bakalli2017computational}). More formally, this estimator is defined as follows:
\begin{equation}
    \hbtheta^{\circ} := \sum^K_{i = 1} w_i \tilde{\bvartheta}_i,
    \label{eq:averagetheta}
\end{equation}
where 
$$\tilde{\bvartheta}_i := \argmin_{\bvartheta_i \in \bm{\Theta}} \| \hbnu_i - {\bnu}(\bvartheta_i)\|^2_{\bm{\Omega}},$$
are the individual GMWM parameter estimates for each signal. This estimator can also be defined as follows:
\begin{equation}
    \hbtheta^{\circ} = \argmin_{\btheta \in \bm{\Theta}} \widehat{Q}^{\circ}(\btheta),
    \label{eq:averagetheta_min}
\end{equation}
where 
$$\widehat{Q}^{\circ}(\btheta) := \| \sum^K_{i = 1} w_i \tilde{\bvartheta}_i - \btheta\|_{\bm{I}}^2,$$
(with $\bm{I}$ being the identity matrix). Based on this definition, it is clear that the criterion defining the AGMWM does not correspond to the objective function in \eqref{eq:optimality}.

The second estimator is new and we call it the Average WV (AWV) estimator which is defined as follows:
\begin{equation}
    \hbtheta^{\dagger} := \argmin_{\btheta \in \bm{\Theta}} \widehat{Q}^{\dagger}(\btheta),
    \label{eq:averagewv}
\end{equation}
where $$\widehat{Q}^{\dagger}(\btheta):= \| \sum^K_{i = 1} w_i \hat{\bnu}_i - \bnu(\btheta))\|^2_{\bm{\Omega}}.$$ The idea behind this estimator is to replicate the structure of the GMWM estimator and, instead of considering a single estimate of the WV, we take the weighted average of the individual estimated WV. The objective function defining this estimator also resembles the criterion given in \eqref{eq:optimality} and, as we will see further on, indeed targets this criterion.

The final estimator we study is the weighted version of the estimator defined in \cite{bakalli2017computational} and \cite{radi2019multisignal} and is given by the solution to the objective function resulting from the weighted average of the individual GMWM objective functions. More specifically, this estimator, referred to as the Multi-Signal GMWM (MS-GMWM), is defined as
\begin{equation}
    \hbtheta := \argmin_{\btheta \in \bm{\Theta}} \widehat{Q}(\btheta), 
	\label{eq:msgmwm}
\end{equation}
where
$$\widehat{Q}(\btheta):=\sum^K_{i=1} w_i \| \hbnu_i - \bnu(\btheta)\|^2_{\bm{\Omega}}\,.$$
This estimator is therefore the result of the minimization of a direct estimator of the criterion in \eqref{eq:optimality}. Indeed, the empirical WV $\hbnu_i$ is an estimator for the theoretical quantity $\bnu (\bvartheta_i)$ while the weighted sum over the $K$ signals is aimed at estimating the theoretical expectation $\mathbb{E}[\cdot]$ under the internal sensor model $G$.

\subsection{Statistical Properties}
\noindent Having formally defined the methods of interest for the problem at hand, we now lay out a series of assumptions that are necessary to define the asymptotic properties of these estimators. For this reason, we also define $\mathbf{H}(\btheta):= \bm{A}(\btheta)^\intercal\bm{\Omega}\bm{A}(\btheta)$ where

$$\bm{A}(\btheta) := \frac{\partial}{\partial \bvartheta^{\top}}\,\bm{\nu}(\bvartheta)\Big|_{\bvartheta = \btheta}\,.$$
\vspace{0.2cm}

\setcounter{Assumption}{0}
\renewcommand{\theHAssumption}{otherAssumption\theAssumption}
\renewcommand\theAssumption{\Alph{Assumption}}

\begin{Assumption}[Parameter Space]
\label{assum:compactness}
$\btheta_0$ is an interior point of the set $\bm{\Theta}$ which is compact.
\end{Assumption}\vspace{0.25cm}
\vspace{0.4cm}

\begin{Assumption}[Theoretical WV]
\label{assum:injectivity}
The theoretical WV is such that:
\begin{itemize}
    \item $\bnu(\btheta)$ is continuously differentiable $\forall \btheta \in \Theta$;
    \item $\bnu(\btheta_1) = \bnu(\btheta_2)$ if and only if $\btheta_1 = \btheta_2$;
    \item $\mathbf{H}(\btheta_0)$ exists and is non-singular.
\end{itemize}
\end{Assumption}
\vspace{0.4cm}


\begin{Assumption}[Asymptotics]
\label{assum:consistent}
\begin{equation*}
        \|\hbnu_i - \bnu(\bvartheta_i)\| = o_p(1).
    \end{equation*}
Moreover, if $\bm{\Omega}_0$ is estimated by $\widehat{\bm{\Omega}}$, then we have that 
\begin{align*}
    ||\widehat{\bm{\Omega}} - \bm{\Omega}_0||_{S} = o_{\rm p}(1).
\end{align*}
where $|| \cdot ||_S$ denotes the matrix spectral norm.
\end{Assumption}
\vspace{0.4cm}

Assumption \ref{assum:compactness} is a standard regularity condition that ensures that certain quantities are bounded and to allow convergence (however it can be partly relaxed depending on the model of interest). Assumption \ref{assum:injectivity} ensures that (i) $\bnu(\cdot)$ is differentiable (in order to perform expansions); (ii) $\bnu(\cdot)$ is injective (in order to have identifiability); and (iii) certain quantities from these expansions exist in order to prove consistency and asymptotic normality of the estimators. Finally, Assumption \ref{assum:consistent} requires consistency of the WV estimator (which was proven under different conditions, see e.g. \cite{guerrier2020robust}, also not requiring Gaussianity of the processes, see e.g. \cite{xu2019multivariate}) as well as that of $\widehat{\bm{\Omega}}$ (if an estimator is actually chosen for the weighting matrix $\bm{\Omega}$). 

These assumptions are required to prove results on consistency and asymptotic normality of the multi-signal approaches described earlier which can provide insight to convergence rates of these approaches as well as justify the use of time-dependent bootstrap methods to deliver adequate uncertainty quantification for each of them. Denoting $T := \underset{i}{\min}\, T_i$, we can now study the first of the considered estimators, namely the AGMWM. For this estimator, we consider the quantity $\btheta^{\circ} := \mathbb{E}\left[\bvartheta_i\right]$  and define
$$\bm{\Lambda}^\circ := \mathbb{E}\left[\bm{H}(\bvartheta_i)^{-1}\bm{A}(\bvartheta_i)^{\top} \bm{\Omega}\bm{V}_i\bm{\Omega}\bm{A}(\bvartheta_i)\bm{H}(\bvartheta_i)^{-1}\right],$$
where $\bm{V}_i := \mathbb{V}[\hbnu_i]$.

\begin{Theorem}
\label{prop.agmwm}
Under Assumptions \ref{assum:compactness} to \ref{assum:consistent} and letting $K,T~\to~\infty$, we have that
$$\sqrt{KT}(\hbtheta^\circ - \btheta^{\circ}) \xrightarrow{\mathcal{D}} \mathcal{N}\left(0, \bm{\Lambda}^\circ\right).$$
\end{Theorem}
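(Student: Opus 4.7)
The plan is to combine a first-order expansion of each single-signal GMWM solution $\tilde{\bvartheta}_i$ with a central limit theorem for weighted sums of independent random vectors. Because $\hbtheta^{\circ}=\sum_{i=1}^{K} w_i\,\tilde{\bvartheta}_i$ is linear in the individual estimators, the AGMWM never needs to be expanded as a whole estimator; the work is concentrated on the single-signal GMWM and on the subsequent weighted aggregation.

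First, using Assumptions \ref{assum:compactness}--\ref{assum:consistent}, I would carry out the standard M-estimator argument conditional on $\bvartheta_i$: compactness of $\bm{\Theta}$, injectivity and continuity of $\bm{\nu}(\cdot)$, and $\hbnu_i \xrightarrow{p} \bm{\nu}(\bvartheta_i)$ give $\tilde{\bvartheta}_i \xrightarrow{p} \bvartheta_i$. Differentiating the first-order condition, a mean-value expansion of $\bm{\nu}$ around $\bvartheta_i$ would then yield the Bahadur-type representation
\begin{equation*}
\tilde{\bvartheta}_i - \bvartheta_i \;=\; -\bm{H}(\bvartheta_i)^{-1}\bm{A}(\bvartheta_i)^{\top}\bm{\Omega}\bigl(\hbnu_i-\bm{\nu}(\bvartheta_i)\bigr)\;+\;r_i,
\end{equation*}
with remainder $r_i$ negligible at the $T_i^{-1/2}$-rate. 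Nonsingularity of $\bm{H}(\bvartheta_i)$ and the allowed replacement of $\bm{\Omega}_0$ by $\widehat{\bm{\Omega}}$ are covered by Assumptions \ref{assum:injectivity} and \ref{assum:consistent}.

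Second, I would substitute this expansion into $\hbtheta^{\circ}$ and use $\sum_i w_i = 1$ to decompose
\begin{equation*}
\hbtheta^{\circ}-\btheta^{\circ} \;=\; \sum_{i=1}^{K} w_i\bigl(\tilde{\bvartheta}_i-\bvartheta_i\bigr)\;+\;\sum_{i=1}^{K} w_i\bigl(\bvartheta_i-\mathbb{E}[\bvartheta_i]\bigr).
\end{equation*}
Because the $\bvartheta_i$ are iid draws from $G$ and each signal is generated conditionally independently given its parameter, the triples $(\bvartheta_i,\hbnu_i,\tilde{\bvartheta}_i)$ are independent across $i$. I would then apply the Lindeberg--Feller CLT to this triangular array, using $w_i\geq 0$, $\sum_i w_i=1$ and $w_i=\mathcal{O}(K^{-1})$ to check the Lindeberg condition; the required moment bounds follow from compactness of $\bm{\Theta}$ and the variance control embodied by $\bm{V}_i$. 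The outer expectation appearing in $\bm{\Lambda}^{\circ}$ then arises by iterated expectations once one conditions on $\{\bvartheta_i\}$ in the variance of the first sum.

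The main obstacle is making the two sources of randomness combine cleanly into the single variance $\bm{\Lambda}^{\circ}$ at the stated $\sqrt{KT}$ rate. The first sum contributes at order $(KT)^{-1/2}$ with conditional asymptotic variance that, after averaging over $G$, matches $\bm{\Lambda}^{\circ}$; the second sum, driven only by sampling from $G$, is nominally of order $K^{-1/2}$. The delicate step is therefore to reconcile these two contributions: either by interpreting $\bm{V}_i:=\mathbb{V}[\hbnu_i]$ as the marginal variance under $G$ (so that the $G$-variability is already bundled into $\bm{\Lambda}^{\circ}$), or by working in a joint-limit regime in which the first, estimation-error component dominates. In parallel, one must establish uniform negligibility of the remainders $r_i$ over $i=1,\ldots,K$, which is again where compactness of $\bm{\Theta}$ and the uniform moment control implied by Assumption \ref{assum:consistent} are decisive.
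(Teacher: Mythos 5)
Your proposal follows essentially the same skeleton as the paper's own (one-paragraph) proof --- consistency of each $\tilde{\bvartheta}_i$ for its $\bvartheta_i$, then a limit theorem for the weighted sum --- except that where the paper simply cites a weighted strong law (Jamison et al.) and a weighted CLT (Weber) for the iid sequence $\tilde{\bvartheta}_i$, you make the mechanics explicit via a Bahadur representation and the decomposition $\hbtheta^{\circ}-\btheta^{\circ}=\sum_i w_i(\tilde{\bvartheta}_i-\bvartheta_i)+\sum_i w_i(\bvartheta_i-\mathbb{E}[\bvartheta_i])$. That explicitness is the more informative route, because it exposes a tension that the paper's citation-only argument never confronts.

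However, the reconciliation you leave open at the end is a genuine gap, and neither of your two suggested fixes works as stated. The $G$-sampling term $\sum_i w_i(\bvartheta_i-\mathbb{E}[\bvartheta_i])$ has variance of order $K^{-1}$, while the estimation-error term has variance of order $(KT)^{-1}$; hence under the $\sqrt{KT}$ normalization the $G$-term is $O_{\rm p}(\sqrt{T})$ and \emph{dominates} --- your proposed regime ``in which the estimation-error component dominates'' has the ordering reversed. Likewise, reading $\bm{V}_i$ as a marginal variance cannot bundle $\mathbb{V}_G[\bvartheta_i]$ into $\bm{\Lambda}^{\circ}$ at the $\sqrt{KT}$ rate, because the between-replicate parameter variance does not shrink with $T$. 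Unless $G$ is degenerate, or one renormalizes by $\sqrt{K}$ and enlarges the covariance by $\mathbb{V}_G[\bvartheta_i]$, or one imposes a relative-rate condition tying $T$ to $K$, the two contributions cannot be merged into the single stated limit. To be fair, this difficulty is inherited from the theorem and the paper's own proof, which obtains normality by treating $\tilde{\bvartheta}_i$ as an iid sequence and invoking Weber's weighted CLT --- an argument that naturally delivers a $\sqrt{K}$-rate limit whose covariance contains $\mathbb{V}_G[\bvartheta_i]$ in addition to the averaged conditional estimation variance. Your decomposition is the right lens for seeing exactly where the statement needs either an extra assumption or a corrected normalization; as written, though, your proof cannot be completed to the stated conclusion any more than the paper's can.
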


\vspace{0.4cm}

\begin{proof}
The proof of Theorem \ref{prop.agmwm} is straightforward since the individual GMWM estimators are consistent for the respective parameters $\bvartheta_i$ under Assumptions \ref{assum:compactness} to \ref{assum:consistent}  (see e.g. \cite{guerrier2020robust}) and, using Theorem 1 of \cite{jamison1965convergence} and Assumption \ref{assum:compactness}, the weighted average of the GMWM estimators $\tilde{\bvartheta}_i$ will converge to their expectation $\mathbb{E}[\bvartheta_i]$ (i.e. $\btheta^{\circ}$). Based on this, we have that the weights respect the conditions in Theorem 1 of \cite{weber2006weighted} and again using Assumption \ref{assum:compactness} we have that $\sqrt{KT}(\hbtheta^\circ - \btheta^{\circ})$ tends to a normal distribution thus concluding the proof.
\end{proof}
\vspace{0.2cm}

From this result, it can be noticed how the AGMWM targets the expected value of the internal sensor model G which does not necessarily correspond to the desired value $\btheta_0$ defined in \eqref{eq:optimality}, except in specific circumstances stated further on. 

\begin{Remark}
\label{rem.unbiased}
When using the unbiased Maximal
Overlap Discrete Wavelet Transform (MODWT) estimator for $\hbnu_i$ (see e.g. \cite{serroukh2000statistical}), then the result of Theorem \ref{prop.agmwm} holds without letting $T \to \infty$ (hence the normalizing factor would only consist in $\sqrt{K}$) and the covariance matrix of $\hbtheta^\circ$ could be denoted as $\bm{\Gamma}(T)$ underlining its dependence on the minimum signal size $T$. This remark holds also for the results on the other estimators studied in the following paragraphs.
\end{Remark}

\vspace{0.2cm}

Considering that the AGMWM does not necessarily target the quantity of interest $\btheta_0$, we now study the AWV estimator whose objective function appears closer to the form of the criterion in \eqref{eq:optimality}. Indeed, the AWV estimator targets the desired quantity as stated in the following theorem.

\begin{Theorem}
Under Assumptions \ref{assum:compactness} to \ref{assum:consistent} and letting $K,T~\to~\infty$, we have that

\begin{equation*}
        \|\hbtheta^{\dagger} - \btheta_0\| = o_p(1).
    \end{equation*}
    \label{lemma:msgmwm}
\end{Theorem}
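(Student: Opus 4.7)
The plan is to follow the standard consistency argument for extremum estimators (à la Newey--McFadden Theorem~2.1). Three ingredients are required: (i) a deterministic pointwise limit $\widetilde{Q}(\btheta)$ of the sample criterion $\widehat{Q}^{\dagger}(\btheta)$, (ii) uniqueness of the minimizer of $\widetilde{Q}$ at $\btheta_0$, and (iii) uniform convergence of $\widehat{Q}^{\dagger}(\btheta)$ to $\widetilde{Q}(\btheta)$ over the compact set $\bm{\Theta}$. The argument is smoothed by the fact that $\btheta$ enters $\widehat{Q}^{\dagger}$ only through the deterministic, continuous map $\bnu(\btheta)$, so all stochastic behaviour is concentrated in the single random vector $\bar{\bnu} := \sum_{i=1}^K w_i \hbnu_i$.

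For (i), I would handle $\bar{\bnu}$ in two stages. First, Assumption~\ref{assum:consistent} gives $\hbnu_i - \bnu(\bvartheta_i) = o_p(1)$ as $T \to \infty$, and since the weights are non-negative and sum to one, this transfers to $\bar{\bnu} - \sum_i w_i \bnu(\bvartheta_i) = o_p(1)$. Second, since $\bvartheta_i \overset{iid}{\sim} G$ with support in the compact set $\bm{\Theta}$ and $\bnu(\cdot)$ is continuous by Assumption~\ref{assum:injectivity}, the random vectors $\bnu(\bvartheta_i)$ are iid and bounded; a componentwise weighted LLN (as in Theorem~1 of \cite{jamison1965convergence}, already invoked in Theorem~\ref{prop.agmwm}, whose applicability is guaranteed by \eqref{eq:weigths2}) then yields $\sum_i w_i \bnu(\bvartheta_i) \to \mathbb{E}[\bnu(\bvartheta_i)]$ in probability as $K \to \infty$. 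Combining the two stages gives $\bar{\bnu} \to \mathbb{E}[\bnu(\bvartheta_i)]$ in probability and, by the continuous mapping theorem, $\widehat{Q}^{\dagger}(\btheta) \to \widetilde{Q}(\btheta) := \|\mathbb{E}[\bnu(\bvartheta_i)] - \bnu(\btheta)\|^2_{\bm{\Omega}}$ pointwise in $\btheta$. The analogous statement with $\widehat{\bm{\Omega}}$ replacing $\bm{\Omega}_0$ follows from the spectral-norm consistency in Assumption~\ref{assum:consistent}. For (ii), expanding the two quadratics shows that $Q(\btheta) - \widetilde{Q}(\btheta) = \mathbb{E}[\bnu(\bvartheta_i)^{\top} \bm{\Omega}\, \bnu(\bvartheta_i)] - \mathbb{E}[\bnu(\bvartheta_i)]^{\top} \bm{\Omega}\, \mathbb{E}[\bnu(\bvartheta_i)]$, which is constant in $\btheta$; hence $\widetilde{Q}$ and $Q$ share the same argmin, and the injectivity of $\bnu(\cdot)$ together with positive-definiteness of $\bm{\Omega}$ guarantee that this argmin is the unique point $\btheta_0$.

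For (iii), uniform convergence essentially comes for free: writing $\widehat{Q}^{\dagger}(\btheta) - \widetilde{Q}(\btheta)$ as an affine-plus-constant functional of $\bar{\bnu} - \mathbb{E}[\bnu(\bvartheta_i)]$, in which $\bnu(\btheta)$ appears only as a coefficient, and using that $\bnu(\btheta)$ is continuous on the compact set $\bm{\Theta}$ (hence uniformly bounded), one obtains $\sup_{\btheta \in \bm{\Theta}} |\widehat{Q}^{\dagger}(\btheta) - \widetilde{Q}(\btheta)| \le C \|\bar{\bnu} - \mathbb{E}[\bnu(\bvartheta_i)]\| + o_p(1)$ for some finite constant $C$, which is $o_p(1)$ by step (i). A standard extremum-estimator consistency theorem then delivers $\|\hbtheta^{\dagger} - \btheta_0\| = o_p(1)$. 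The one subtlety to sort out is the joint regime $K, T \to \infty$: the cleanest route is to split $\bar{\bnu} - \mathbb{E}[\bnu(\bvartheta_i)]$ into a $T$-driven piece that vanishes for every fixed $K$ by Assumption~\ref{assum:consistent} and a $K$-driven weighted-LLN piece that vanishes for every fixed realization of $\{\bvartheta_i\}$, which together give the required joint convergence without imposing any explicit rate relationship between $K$ and $T$.
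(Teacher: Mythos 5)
Your proposal is correct and follows essentially the same route as the paper's proof: reduce $Q(\btheta)$ to $\widetilde{Q}(\btheta) := \|\mathbb{E}[\bnu(\bvartheta_i)] - \bnu(\btheta)\|^2_{\bm{\Omega}}$ by observing that the two criteria differ by a $\btheta$-independent trace term, establish $\bar{\bnu} \to \mathbb{E}[\bnu(\bvartheta_i)]$ by combining Assumption~\ref{assum:consistent} with the weighted law of large numbers of \cite{jamison1965convergence}, and conclude via Theorem~2.1 of \cite{newey1994large}. If anything, your treatment is slightly more explicit than the paper's on the uniform-convergence bound over the compact set $\bm{\Theta}$ and on the joint $K,T \to \infty$ limit, but the decomposition and key lemmas are identical.
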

\vspace{0.4cm}

\begin{proof}
The expectation of a quadratic form such as that in $Q(\btheta)$ can be written as:
$$Q(\btheta) = \underbrace{\|\mathbb{E}[\bnu(\bvartheta_i) - \bnu(\btheta)]\|_{\bm{\Omega}}^2}_{\widetilde{Q}(\btheta)} + \tr(\bm{\Omega}\mathbb{V}[\hbnu_i]),$$
where $\tr(\cdot)$ indicates the matrix trace and $\mathbb{V}[\hbnu_i]$ denotes the variance of the WV estimator. Since the second term does not depend on the fixed parameter $\btheta$, the criteria $Q(\btheta)$ and $\widetilde{Q}(\btheta)$ are both minimized in the same point (i.e., $\btheta_0$) based on Assumption \ref{assum:injectivity}. Given Assumptions \ref{assum:compactness} to \ref{assum:consistent} (see, e.g., \cite{guerrier2020robust}), we therefore only need to prove 
$$\underset{\bm{\theta} \in \bm{\Theta}}{\sup}\big|\widehat{Q}^{\dagger}(\btheta) - \widetilde{Q}(\btheta)\big| \xrightarrow{P} 0.$$
To simplify notation, we use $\bar{\bnu}:=\sum_{i=1}^K w_i \hbnu_i$. Firstly, based on Assumption \ref{assum:consistent} we have that $\hbnu_i$ converges to $\bnu(\bvartheta_i)$ which is bounded based on Assumptions \ref{assum:compactness} and \ref{assum:injectivity}. Hence, using Theorem 1 of \cite{jamison1965convergence}, Assumption \ref{assum:compactness} and the continuity of the norm, we have that
\begin{eqnarray*}
\underset{K \to \infty}{\lim} \widehat{Q}^{\dagger}(\btheta) &=& \| \underset{K \to \infty}{\lim} \bar{\bnu} - \bnu(\btheta)\|^2_{\bm{\Omega}}\\
&=& \| \mathbb{E}[\hbnu_i] - \bnu(\btheta)\|^2_{\bm{\Omega}}.
\end{eqnarray*}
The expression in the norm can be written as
$$\mathbb{E}[\hbnu_i] - \bnu(\btheta) = \mathbb{E}[\mathbb{E}_X\left[\hbnu|\bvartheta_i\right]] - \bnu(\btheta),$$
where $\mathbb{E}_X\left[\hbnu|\bvartheta_i\right]$ is the conditional expectation of the WV estimator $\hbnu$ given  $\bvartheta_i$. Based on Assumption \ref{assum:consistent}, we can write this last expression as
$$\mathbb{E}[\bnu(\bvartheta_i) + o_p(1)] - \bnu(\btheta) = \mathbb{E}[\bnu(\bvartheta_i) - \bnu(\btheta)] + o_p(1),$$
which allows us to express the norm as
$$\|\mathbb{E}[\bnu(\bvartheta_i) - \bnu(\btheta)]\|_{\bm{\Omega}}^2 + o_p(1).$$
Based on the latter, we prove that 
$$\underset{\bm{\theta} \in \bm{\Theta}}{\sup}\big|\widehat{Q}^{\dagger}(\btheta) - \widetilde{Q}(\btheta)\big| \xrightarrow{P} 0,$$
which, using Theorem 2.1 of \cite{newey1994large} with Assumptions \ref{assum:compactness} to \ref{assum:consistent} concludes the proof.
\end{proof}
\vspace{0.2cm}

Theorem~\ref{lemma:msgmwm} therefore shows that the AWV targets the desired quantity and is therefore preferable over the AGMWM if one aims at minimizing the criterion in \eqref{eq:optimality}. In addition, as underlined in Remark \ref{rem.unbiased}, if using the unbiased MODWT estimator Theorem~\ref{lemma:msgmwm} result would hold also in the case where $T$ does not diverge. 

The third estimator that we would need to study is the MS-GMWM. However, the following proposition underlines how the two estimators (AWV and MS-GMWM) are actually the same estimator.

\begin{Proposition}
\label{prop.equality}
Under Assumptions \ref{assum:compactness} and \ref{assum:injectivity}, we have that
$$\hbtheta^{\dagger} = \hbtheta\,.$$
\end{Proposition}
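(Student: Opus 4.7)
The plan is to expand both quadratic objective functions $\widehat{Q}(\btheta)$ and $\widehat{Q}^{\dagger}(\btheta)$ in full, isolate all terms depending on $\btheta$, and show that these terms coincide. The remaining $\btheta$-free terms will just be constants with respect to the optimization, so the two argmins will be identical. The key algebraic fact that makes things work is that the weights satisfy $\sum_{i=1}^K w_i = 1$ (as noted in \eqref{eq:weigths2}), which lets me pull the weighted sum out of the cross-term.

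Concretely, first I would expand
\[
\widehat{Q}(\btheta) = \sum_{i=1}^K w_i \hbnu_i^{\intercal}\bm{\Omega}\hbnu_i \;-\; 2\Big(\sum_{i=1}^K w_i \hbnu_i\Big)^{\intercal}\bm{\Omega}\,\bnu(\btheta) \;+\; \Big(\sum_{i=1}^K w_i\Big)\,\bnu(\btheta)^{\intercal}\bm{\Omega}\,\bnu(\btheta),
\]
and use $\sum_i w_i = 1$ to reduce the last term to $\bnu(\btheta)^{\intercal}\bm{\Omega}\,\bnu(\btheta)$. Next I would expand
\[
\widehat{Q}^{\dagger}(\btheta) = \Big(\sum_{i=1}^K w_i \hbnu_i\Big)^{\intercal}\bm{\Omega}\Big(\sum_{i=1}^K w_i \hbnu_i\Big) \;-\; 2\Big(\sum_{i=1}^K w_i \hbnu_i\Big)^{\intercal}\bm{\Omega}\,\bnu(\btheta) \;+\; \bnu(\btheta)^{\intercal}\bm{\Omega}\,\bnu(\btheta).
\]

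Subtracting, the linear-in-$\bnu(\btheta)$ cross-term and the pure quadratic-in-$\bnu(\btheta)$ term cancel exactly, leaving
\[
\widehat{Q}(\btheta) - \widehat{Q}^{\dagger}(\btheta) = \sum_{i=1}^K w_i \hbnu_i^{\intercal}\bm{\Omega}\hbnu_i - \Big(\sum_{i=1}^K w_i \hbnu_i\Big)^{\intercal}\bm{\Omega}\Big(\sum_{i=1}^K w_i \hbnu_i\Big),
\]
which depends only on the data $\hbnu_1,\ldots,\hbnu_K$ and on $\bm{\Omega}$, but not on $\btheta$. Because the two objectives differ by a $\btheta$-independent constant on $\bm{\Theta}$, they attain their minimum at the same point. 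Identifiability (Assumption~\ref{assum:injectivity}) together with the compactness of $\bm{\Theta}$ (Assumption~\ref{assum:compactness}) guarantees that the argmin is well-defined, giving $\hbtheta^{\dagger} = \hbtheta$.

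The ``main obstacle'' here is really just a bookkeeping check: verify that the pure quadratic-in-$\bnu(\btheta)$ coefficient is the same in both expansions, which is exactly what $\sum_i w_i = 1$ buys us. If the weights did not sum to one the two estimators would differ by a rescaling of $\bnu(\btheta)^{\intercal}\bm{\Omega}\,\bnu(\btheta)$ and the identity would fail, so it is worth emphasizing in the write-up that the normalization in \eqref{eq:weigths2} is essential. No asymptotic machinery is needed; the equality is a finite-sample, deterministic algebraic identity valid for any $K$, any $T_i$, and any choice of $\bm{\Omega}$ (estimated or fixed), which also explains why Assumption~\ref{assum:consistent} does not appear in the hypotheses of the proposition.
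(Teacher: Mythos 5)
Your proof is correct, but it takes a different (and in fact more elementary) route than the paper. The paper differentiates both objective functions, observes that $\tfrac{\partial}{\partial\btheta}\widehat{Q}(\btheta)$ and $\tfrac{\partial}{\partial\btheta}\widehat{Q}^{\dagger}(\btheta)$ coincide after using $\sum_i w_i = 1$, and concludes that the two estimators solve the same first-order condition. Your expansion instead shows directly that
$\widehat{Q}(\btheta) - \widehat{Q}^{\dagger}(\btheta) = \sum_i w_i \hbnu_i^{\intercal}\bm{\Omega}\hbnu_i - \bar{\bnu}^{\intercal}\bm{\Omega}\bar{\bnu}$
is free of $\btheta$, so the two objectives have identical argmin sets as a finite-sample algebraic identity. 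This buys you two things the paper's argument does not: you need no differentiability of $\bnu(\cdot)$ (only existence/uniqueness of the minimizer, for which continuity, compactness and injectivity suffice), and you avoid the implicit step in the derivative-based proof that the minimizer is an interior stationary point uniquely characterized by the first-order condition. Your observation that the normalization $\sum_i w_i = 1$ is the essential ingredient is also exactly the role it plays in the paper's computation, where it is used to collapse $\sum_i w_i \bnu(\btheta)$ into $\bnu(\btheta)$ inside the gradient. The one cosmetic point worth noting in a write-up is that collecting the two cross terms into $-2\bar{\bnu}^{\intercal}\bm{\Omega}\bnu(\btheta)$ uses the symmetry of $\bm{\Omega}$, which holds here since $\bm{\Omega}$ is positive definite (and in any case the cross terms appear identically in both expansions, so the cancellation goes through regardless).
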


\vspace{0.4cm}

\begin{proof}
Under Assumptions \ref{assum:compactness} and \ref{assum:injectivity}, the AWV and MS-GMWM estimators can be defined in terms of their derivatives, i.e.,
$$\hbtheta^{\dagger} := \argzero_{\btheta \in \bm{\Theta}} \frac{\partial}{\partial \btheta} \widehat{Q}^{\dagger}(\btheta),$$
and
$$\hbtheta := \argzero_{\btheta \in \bm{\Theta}} \frac{\partial}{\partial \btheta} \widehat{Q}(\btheta),$$
respectively, where $\argzero$ stands for the value of $\btheta$ that allows the expression to be zero. Considering this, the derivative of $\widehat{Q}(\btheta)$ is given by
\begin{eqnarray*}
\frac{\partial}{\partial \btheta} \widehat{Q}(\btheta) &=& \frac{\partial}{\partial \btheta} \sum_{i=1}^K w_i \|\hbnu_i - \bnu(\btheta)\|_{\bm{\Omega}}^2\\
&=& -\sum_{i = 1}^K 2 w_i \frac{\partial}{\partial \btheta} \bnu(\btheta) \bm{\Omega}(\hbnu_i - \bnu(\btheta))\\
&=& -2 \frac{\partial}{\partial \btheta} \bnu(\btheta) \bm{\Omega} \sum_{i = 1}^K w_i(\hbnu_i - \bnu(\btheta)).
\end{eqnarray*}
Knowing that $\sum_{i = 1}^K w_i = 1$, we finally have that
$$\frac{\partial}{\partial \btheta} \widehat{Q}(\btheta) = -2 \frac{\partial}{\partial \btheta} \bnu(\btheta) \bm{\Omega} \left(\sum_{i = 1}^K w_i\hbnu_i - \bnu(\btheta)\right)\,.$$
If we take the derivative of $\widehat{Q}^{\dagger}(\btheta)$, we obtain
\begin{eqnarray*}
\frac{\partial}{\partial \btheta} \widehat{Q}^{\dagger}(\btheta) &=& \frac{\partial}{\partial \btheta} \|\sum_{i=1}^K w_i\hbnu_i - \bnu(\btheta)\|_{\bm{\Omega}}^2\\
&=& -2 \frac{\partial}{\partial \btheta} \bnu(\btheta) \bm{\Omega}\left(\sum_{i = 1}^K w_i \hbnu_i - \bnu(\btheta)\right).\\
\end{eqnarray*}
Since $\widehat{Q}(\btheta)$ and $\widehat{Q}^{\dagger}(\btheta)$ have the same derivative, under Assumptions \ref{assum:compactness} and \ref{assum:injectivity} they have the same solution in zero and, consequently, we have that $\hbtheta^{\dagger} = \hbtheta$ thus concluding the proof.
\end{proof}
\vspace{0.2cm}

Given Proposition \ref{prop.equality}, we do not need to study the properties of the MS-GMWM since they will be the same as those of the AWV. Considering this, having proved consistency of the AWV, let us now deliver the final property of the AWV which consists in its asymptotic distribution.

\begin{Proposition}
Under Assumptions \ref{assum:compactness} to \ref{assum:consistent} and letting $K,T~\to~\infty$, we have that
$$\sqrt{KT}(\hbtheta^{\dagger} - \btheta_0) \xrightarrow{\mathcal{D}} \mathcal{N}\left(0, \bm{\Lambda}_0\right),$$
where $\bm{\Lambda}_0
            := \bm{H}(\bm{\theta}_0)^{-1}\bm{A}(\bm{\theta}_0)^{\intercal} \bm{\Omega}\bar{\bm{V}}\bm{\Omega}\bm{A}(\bm{\theta}_0)\bm{H}(\bm{\theta}_0)^{-1}$ is the asymptotic covariance matrix, with $\bar{\bm{V}}:= \mathbb{E}[\bm{V}_i]$.
    \label{prop:msgmwm}
\end{Proposition}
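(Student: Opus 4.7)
The plan is to adapt the standard M-estimator asymptotic argument to the multi-signal setting, which requires handling two layers of randomness: the intra-signal WV estimation error and the inter-signal parameter variability induced by the distribution $G$. I would start from the first-order condition defining $\hbtheta^{\dagger}$,
$$\frac{\partial}{\partial \btheta}\widehat{Q}^{\dagger}(\btheta)\Big|_{\hbtheta^{\dagger}} = -2\,\bm{A}(\hbtheta^{\dagger})^{\intercal}\bm{\Omega}\bigl(\bar{\bnu} - \bnu(\hbtheta^{\dagger})\bigr) = 0,$$
with $\bar{\bnu}:=\sum_{i=1}^K w_i \hbnu_i$, and perform a mean-value Taylor expansion of $\bnu(\hbtheta^{\dagger})$ around $\btheta_0$. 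The consistency $\hbtheta^{\dagger}\xrightarrow{P}\btheta_0$ established in Theorem~\ref{lemma:msgmwm}, the continuity of $\bm{A}(\cdot)$ from Assumption~\ref{assum:injectivity}, the nonsingularity of $\bm{H}(\btheta_0)$, and Assumption~\ref{assum:consistent} applied to $\widehat{\bm{\Omega}}$ (when used) together deliver the linearization
$$\sqrt{KT}(\hbtheta^{\dagger}-\btheta_0) = \bm{H}(\btheta_0)^{-1}\bm{A}(\btheta_0)^{\intercal}\bm{\Omega}\cdot\sqrt{KT}\bigl(\bar{\bnu}-\bnu(\btheta_0)\bigr) + o_p(1).$$

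The problem then reduces to establishing a central limit theorem for $\sqrt{KT}(\bar{\bnu} - \bnu(\btheta_0))$. I would decompose
$$\bar{\bnu} - \bnu(\btheta_0) = \sum_{i=1}^K w_i\bigl(\hbnu_i - \bnu(\bvartheta_i)\bigr) + \sum_{i=1}^K w_i\bigl(\bnu(\bvartheta_i) - \bnu(\btheta_0)\bigr),$$
separating intra- and inter-signal contributions. For the intra-signal term, conditionally on $\{\bvartheta_i\}_{i=1}^{K}$ the quantities $\hbnu_i - \bnu(\bvartheta_i)$ are independent, mean zero and admit a CLT with conditional asymptotic variance $\bm{V}_i$ (by the WV CLT results referenced in \cite{guerrier2020robust} and \cite{xu2019multivariate}). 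A Lindeberg--Feller CLT for triangular arrays, combined with the weight property $w_i = \mathcal{O}(K^{-1})$ in \eqref{eq:weigths2} and a tower-property argument to integrate out $\bvartheta_i$, should yield
$$\sqrt{KT}\sum_{i=1}^K w_i\bigl(\hbnu_i - \bnu(\bvartheta_i)\bigr) \xrightarrow{\mathcal{D}} \mathcal{N}(\mathbf{0},\bar{\bm{V}}),$$
with $\bar{\bm{V}}=\mathbb{E}[\bm{V}_i]$ emerging as the limit of the averaged conditional variances.

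For the inter-signal piece, the critical observation is that the first-order condition defining $\btheta_0$ in \eqref{eq:optimality} gives the orthogonality
$$\bm{A}(\btheta_0)^{\intercal}\bm{\Omega}\,\mathbb{E}[\bnu(\bvartheta_i)-\bnu(\btheta_0)] = \mathbf{0}.$$
After projection by $\bm{A}(\btheta_0)^{\intercal}\bm{\Omega}$, which is the only projection that affects the limit via the linearization, the summands become iid mean-zero random vectors with finite variance (bounded by Assumption~\ref{assum:compactness}), so the projected inter-signal contribution is $O_p(\sqrt{T/K})$ and vanishes under the joint regime in which $T/K \to 0$ (or more generally does not dominate). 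Finally, combining the two CLT results with the linearization through Slutsky's theorem produces the stated limit with covariance
$$\bm{\Lambda}_0 = \bm{H}(\btheta_0)^{-1}\bm{A}(\btheta_0)^{\intercal}\bm{\Omega}\,\bar{\bm{V}}\,\bm{\Omega}\bm{A}(\btheta_0)\bm{H}(\btheta_0)^{-1}.$$

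The main obstacle I anticipate is carefully controlling the inter-signal component: showing it is indeed negligible at the $\sqrt{KT}$ rate implicitly requires a relative-rate condition on $K$ versus $T$ (or that $\bm{V}_i$ scales appropriately with $T$), and this must be reconciled with the informal statement ``$K,T\to\infty$''. The secondary technical difficulty is the unified bookkeeping for conditional versus unconditional asymptotics—verifying the Lindeberg condition for a weighted triangular array whose conditional covariance $\bm{V}_i$ is itself random through $\bvartheta_i$—so that the limiting variance emerges cleanly as $\bar{\bm{V}}$.
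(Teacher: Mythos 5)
Your linearization is exactly the paper's: the first-order condition, the mean-value expansion of $\bnu(\hbtheta^{\dagger})$ around $\btheta_0$, and Slutsky's theorem reduce everything to a CLT for $\sqrt{KT}(\bar{\bnu}-\bnu(\btheta_0))$. Where you depart is in making that CLT explicit: the paper simply invokes Theorem 1 of \cite{weber2006weighted} at this point, whereas you decompose $\bar{\bnu}-\bnu(\btheta_0)$ into an intra-signal term $\sum_i w_i(\hbnu_i-\bnu(\bvartheta_i))$ and an inter-signal term $\sum_i w_i(\bnu(\bvartheta_i)-\bnu(\btheta_0))$. Your treatment of the intra-signal term (a conditional Lindeberg--Feller argument followed by averaging of the conditional variances) is sound and correctly produces $\bar{\bm{V}}=\mathbb{E}[\bm{V}_i]$.

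The gap is in the inter-signal term. Write $Z_i:=\bm{A}(\btheta_0)^{\intercal}\bm{\Omega}\left(\bnu(\bvartheta_i)-\bnu(\btheta_0)\right)$; as you note, the first-order condition defining $\btheta_0$ gives $\mathbb{E}[Z_i]=\bm{0}$, so the $Z_i$ are iid and mean zero. But then $\sum_{i=1}^K w_iZ_i=O_p(K^{-1/2})$, not $O_p(K^{-1})$, and hence
$$\sqrt{KT}\sum_{i=1}^K w_i Z_i = O_p\left(\sqrt{KT}\cdot K^{-1/2}\right) = O_p(\sqrt{T}),$$
not $O_p(\sqrt{T/K})$ as you claim. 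The projection by $\bm{A}(\btheta_0)^{\intercal}\bm{\Omega}$ annihilates only the deterministic bias $\mathbb{E}[\bnu(\bvartheta_i)]-\bnu(\btheta_0)$, not the fluctuation of $\bnu(\bvartheta_i)$ around its mean. Consequently, whenever $G$ is non-degenerate and $\bm{A}(\btheta_0)^{\intercal}\bm{\Omega}\,\mathbb{V}[\bnu(\bvartheta_i)]\,\bm{\Omega}\bm{A}(\btheta_0)\neq\bm{0}$, this term diverges at the $\sqrt{KT}$ scale under every joint regime with $T\to\infty$; no relative-rate condition on $K$ versus $T$ rescues the step, and in that case the natural normalization would be $\sqrt{K}$, with the inter-signal variance dominating the limit. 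So your decomposition correctly isolates the crux of the problem, but the step that disposes of the inter-signal contribution does not go through as written: you would need either to assume that the projected inter-signal variance vanishes, or to modify the rate and covariance accordingly. (For what it is worth, the paper's own proof does not confront this issue either, since it applies the weighted CLT directly to $\sqrt{KT}(\bar{\bnu}-\bnu(\btheta_0))$ without separating the two layers of randomness.)
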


\vspace{0.4cm}

\begin{proof}
This proof is adapted and closely follows the proof of Lemma 3.1 in \cite{guerrier2020robust}. More specifically, given the results on the consistency in Theorem \ref{lemma:msgmwm}, the proof of asymptotic normality of $\hbtheta^{\dagger}$ naturally follows the standard proof of asymptotic normality for extremum estimators (see e.g. \cite{newey1994large}). Indeed, using again the notation $\bar{\bnu} := \sum_{i=1}^K w_i \hbnu_i$ and under Assumption \ref{assum:injectivity}, by the definition of $\hbtheta^{\dagger}$, we have 
    \begin{equation*}
        \begin{aligned}
            &\frac{\partial \widehat{Q}^{\dagger}(\btheta)}{\partial \btheta} \bigg\rvert_{\btheta = \hbtheta^{\dagger}} = \bm{0}_{p \times 1} \\
            \Longleftrightarrow& \frac{\partial}{\partial \bm{\theta}}\left[\left( \bar{\bnu} - \bnu(\btheta) \right)^{\intercal} \bm{\Omega} \left( \bar{\bnu} - \bnu(\btheta) \right) \right] \Bigg\rvert_{\btheta = \hbtheta^{\dagger}} = \bm{0}_{p \times 1}
        \end{aligned}
    \end{equation*}
    which, up to a constant, yields
    \begin{equation}
        \label{eq:1order}
        \underbrace{\left(\frac{\partial}{\partial \bm{\theta}} \left( \bar{\bnu} - \bm{\nu}(\bm{\theta}) \right)^{\intercal}\big\rvert_{\bm{\theta} = \hbtheta^{\dagger}}\right)}_{\bm{B}(\hat{\bm{\theta}}{\color{blue}^{\dagger}})} \bm{\Omega} \left( \bar{\bnu} - \bm{\nu}(\hbtheta^{\dagger}) \right)  = \bm{0}_{p \times 1}.
    \end{equation}
    The multivariate mean value theorem ensures that, based on Assumption \ref{assum:compactness}, there exists a matrix $A(\hbtheta^{\dagger}, \btheta_0)$ that can be used to expand $\bar{\bnu} - \bm{\nu}(\hbtheta^{\dagger})$ around $\btheta_0$ in the following way
    \begin{equation}
        \label{eq:maclaurin}
        \bar{\bnu} - \bm{\nu}(\hbtheta^{\dagger}) = \bar{\bnu} - \bm{\nu}(\bm{\theta}_0) + 
        \bm{A}(\hbtheta^{\dagger}, \btheta_0)
         \left( \hbtheta^{\dagger} - \btheta_0 \right).
    \end{equation}
    Based on the derivatives in the proofs of Proposition \ref{prop.equality} (whose solutions for zero occur when $\bar{\bnu} = \bnu(\btheta)$) and using Theorem \ref{lemma:msgmwm}, we have that  $\bar{\bnu} \overset{p}{\to} \bm{\nu}(\bm{\theta}_0)$ and $\hbtheta^{\dagger} \overset{p}{\to} \btheta_0$ such that the multivariate mean value theorem also guarantees that the matrix $\bm{A}(\hbtheta^{\dagger}, \btheta_0)$ has the following property
    \begin{equation*}
        \bm{A}(\hbtheta^{\dagger}, \btheta_0) \overset{p}{\to} \frac{\partial}{\partial \bm{\theta}^{\intercal}} \left( \bm{\nu}(\bm{\theta}_0) - \bm{\nu}(\bm{\theta})\right)\bigg\rvert_{\bm{\theta} = \bm{\theta}_0} = \frac{\partial}{\partial \bm{\theta}^{\intercal}} \bm{\nu}(\bm{\theta})\bigg\rvert_{\bm{\theta} = \bm{\theta}_0},
    \end{equation*}
    given that $\nicefrac{\partial}{\partial \bm{\theta}}\,\bm{\nu}(\bm{\theta}^{\intercal})$ is continuous. Plugging (\ref{eq:maclaurin}) in the third factor of (\ref{eq:1order}), multiplying by $\sqrt{KT}$ and using Assumption \ref{assum:injectivity} allows us to state that $\sqrt{KT}\left( \hbtheta^{\dagger} - \btheta_0 \right)$ is equal to

    \begin{equation}
            -\left[\bm{B}(\hbtheta^{\dagger})\,\bm{\Omega} \,\bm{A}(\hbtheta^{\dagger}, \btheta_0)\right]^{-1} \bm{B}(\hbtheta^{\dagger})\,\bm{\Omega} \,\sqrt{KT} \left( \bar{\bnu} - \bm{\nu}(\btheta_0) \right).
            \label{eq.asy_form}
    \end{equation}
Knowing that $\bm{B}(\hbtheta^{\dagger}) \overset{p}{\to} \bm{A}(\bm{\theta}_0)^{\intercal}$ by the continuous mapping theorem, and that $\hbtheta^{\dagger} \overset{p}{\to} \btheta_0$ from Theorem~\ref{lemma:msgmwm}, by Slutsky's theorem we have that
\begin{equation*}
            \left[\bm{B}(\hbtheta^{\dagger})\,\bm{\Omega} \,\bm{A}(\hbtheta^{\dagger}, \bm{\theta}_0)\right]^{-1} \bm{B}(\hbtheta^{\dagger})\,\bm{\Omega}
    \end{equation*}
    converges in probability to
\begin{equation*}
    \left[ \bm{A}(\bm{\theta}_0)^{\intercal}\bm{\Omega} \bm{A}(\bm{\theta}_0)\right]^{-1} \bm{A}(\bm{\theta}_0)^{\intercal} \bm{\Omega}.
\end{equation*}
By again using Slutsky's theorem as well Theorem 1 of \cite{weber2006weighted} in conjunction with Assumption \ref{assum:compactness}, we have that (\ref{eq.asy_form}) has the following asymptotic distribution
    \begin{equation*}
    \sqrt{KT}\left( \hbtheta^{\dagger} - \bm{\theta}_0 \right) \xrightarrow[KT\rightarrow\infty]{\mathcal{D}} \mathcal{N}(\bm{0},\bm{\Lambda}_0),
    \end{equation*}
     where $\bm{\Lambda}_0$ is given by
     \begin{equation*}
            \bm{\Lambda}_0
            := \bm{H}(\bm{\theta}_0)^{-1}\bm{A}(\bm{\theta}_0)^{\intercal} \bm{\Omega}\bar{\bm{V}}\bm{\Omega}\bm{A}(\bm{\theta}_0)\bm{H}(\bm{\theta}_0)^{-1},
    \end{equation*}
    and $\bar{\bm{V}} := \mathbb{E}[\bm{V}_i]$ thus concluding the proof.
\end{proof}
\vspace{0.2cm}

As a consequence of Proposition~\ref{prop.equality} and \ref{prop:msgmwm}, we can also state the following corollary.

\begin{Corollary}
Under Assumptions \ref{assum:compactness} to \ref{assum:consistent} and letting $K,T~\to~\infty$, we have that
$$\sqrt{KT}(\hbtheta - \btheta_0) \xrightarrow{\mathcal{D}} \mathcal{N}\left(0, \bm{\Lambda}\right).$$
    \label{corol:msgmwm}
\end{Corollary}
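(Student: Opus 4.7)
The plan is to obtain the corollary as an immediate consequence of the two preceding results: Proposition \ref{prop.equality}, which establishes the pointwise identity $\hbtheta^{\dagger} = \hbtheta$ under Assumptions \ref{assum:compactness} and \ref{assum:injectivity}, and Proposition \ref{prop:msgmwm}, which gives the asymptotic normality of $\hbtheta^{\dagger}$ under the full set of Assumptions \ref{assum:compactness}--\ref{assum:consistent}. Since no additional analytic work is needed, the proof is essentially a one-line substitution argument.

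Concretely, I would first invoke Proposition \ref{prop.equality} to write $\hbtheta = \hbtheta^{\dagger}$ (as random variables, for every sample, hence for every $K$ and every $T$). I would then subtract $\btheta_0$ from both sides and multiply by the normalising rate $\sqrt{KT}$, yielding the equality in distribution
\begin{equation*}
\sqrt{KT}(\hbtheta - \btheta_0) \;=\; \sqrt{KT}(\hbtheta^{\dagger} - \btheta_0).
\end{equation*}
Applying Proposition \ref{prop:msgmwm} to the right-hand side, the limiting law $\mathcal{N}(0,\bm{\Lambda}_0)$ immediately transfers to the left-hand side. Identifying $\bm{\Lambda}$ with $\bm{\Lambda}_0 = \bm{H}(\btheta_0)^{-1}\bm{A}(\btheta_0)^{\intercal}\bm{\Omega}\bar{\bm{V}}\bm{\Omega}\bm{A}(\btheta_0)\bm{H}(\btheta_0)^{-1}$ concludes the argument.

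There is no genuine obstacle here: the real analytical content (the mean-value expansion, the use of Slutsky's theorem, and the weighted law of large numbers via Theorem 1 of \cite{weber2006weighted} and Theorem 1 of \cite{jamison1965convergence}) has already been discharged in the proofs of Propositions \ref{prop.equality} and \ref{prop:msgmwm}. The only thing worth flagging is that Proposition \ref{prop.equality} gives an exact sample-path equality rather than an asymptotic one, so the asymptotic distribution can be transported without invoking any additional convergence-in-distribution tools. If one wished to be pedantic, one could note that the hypotheses of Proposition \ref{prop:msgmwm} strictly include those of Proposition \ref{prop.equality}, so both results apply simultaneously under the stated Assumptions \ref{assum:compactness}--\ref{assum:consistent}, and the normalising rate $\sqrt{KT}$ (which may be replaced by $\sqrt{K}$ under the MODWT caveat of Remark \ref{rem.unbiased}) is inherited verbatim.
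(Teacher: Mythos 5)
Your proposal is correct and coincides with the paper's own treatment: the paper explicitly omits the proof, noting the corollary is a direct consequence of Proposition \ref{prop.equality} (the exact sample-path identity $\hbtheta = \hbtheta^{\dagger}$) combined with Proposition \ref{prop:msgmwm}, which is precisely your substitution argument with $\bm{\Lambda} = \bm{\Lambda}_0$.
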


We omit the proof of this corollary since it is a direct consequence of Proposition \ref{prop.equality}. Again, the above results on consistency and asymptotic normality of the AWV (and consequently MS-GMWM) would hold without letting $T$ diverge if one employs the unbiased MODWT estimator (as stated for example in Remark \ref{rem.unbiased}). 

We conclude this section by delivering one final result which states the case under which the AGMWM actually targets the desired quantity $\btheta_0$. This result is provided in the following proposition where $\bm{W}$ denotes a non-singular matrix.

\begin{Proposition}
If the theoretical WV is such that $\bnu(\btheta) = \bm{W}\btheta$ we have that
$$\hbtheta^{\circ} = \hbtheta^{\dagger} = \hbtheta\,.$$
\end{Proposition}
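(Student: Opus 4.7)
The plan rests almost entirely on the previously established Proposition on equality, which already gives $\hbtheta^{\dagger} = \hbtheta$ without any linearity assumption. Consequently, the only thing to verify under the additional hypothesis $\bnu(\btheta) = \bm{W}\btheta$ is the identity $\hbtheta^{\circ} = \hbtheta^{\dagger}$.

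The key observation is that linearity of $\bnu(\cdot)$ turns every optimization problem appearing in the three estimators into a weighted linear least-squares problem. First I would write down the closed-form solutions obtained from the first-order conditions. For each individual signal we obtain
$$\tilde{\bvartheta}_i \;=\; (\bm{W}^{\intercal}\bm{\Omega}\bm{W})^{-1}\bm{W}^{\intercal}\bm{\Omega}\,\hbnu_i,$$
and for the AWV, minimizing $\widehat{Q}^{\dagger}(\btheta) = \| \sum_{i}w_i\hbnu_i - \bm{W}\btheta\|_{\bm{\Omega}}^2$ yields
$$\hbtheta^{\dagger} \;=\; (\bm{W}^{\intercal}\bm{\Omega}\bm{W})^{-1}\bm{W}^{\intercal}\bm{\Omega}\sum_{i=1}^{K} w_i\hbnu_i.$$
Invertibility of $\bm{W}^{\intercal}\bm{\Omega}\bm{W}$ follows from the positive definiteness of $\bm{\Omega}$ combined with the stated non-singularity of $\bm{W}$ (equivalently, this is $\bm{H}(\btheta)$ in the linear case, which is non-singular by Assumption~\ref{assum:injectivity}).

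The remaining step is a one-line algebraic manipulation: by definition of the AGMWM and linearity of the map $\hbnu_i \mapsto \tilde{\bvartheta}_i$ in $\hbnu_i$,
$$\hbtheta^{\circ} = \sum_{i=1}^{K} w_i\tilde{\bvartheta}_i = (\bm{W}^{\intercal}\bm{\Omega}\bm{W})^{-1}\bm{W}^{\intercal}\bm{\Omega}\sum_{i=1}^{K} w_i\hbnu_i = \hbtheta^{\dagger}.$$
Combining with $\hbtheta^{\dagger} = \hbtheta$ from Proposition on equality then yields the full chain $\hbtheta^{\circ} = \hbtheta^{\dagger} = \hbtheta$.

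There is no real obstacle here; the entire content is the interchange of the linear projection operator with the weighted average over replicates. The mildly delicate point, which I would flag but not belabor, is simply justifying that $\bm{W}^{\intercal}\bm{\Omega}\bm{W}$ is indeed invertible so that the closed-form expressions above are well defined — and this is immediate from the structural assumptions already in force.
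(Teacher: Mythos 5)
Your proposal is correct and follows essentially the same route as the paper: both exploit the closed-form weighted least-squares solution $(\bm{W}^{\intercal}\bm{\Omega}\bm{W})^{-1}\bm{W}^{\intercal}\bm{\Omega}$ in the linear case, interchange this linear map with the weighted average over replicates to get $\hbtheta^{\circ} = \hbtheta^{\dagger}$, and then invoke the earlier equality proposition for $\hbtheta^{\dagger} = \hbtheta$. Your explicit remark on the invertibility of $\bm{W}^{\intercal}\bm{\Omega}\bm{W}$ is a small but welcome addition that the paper leaves implicit.
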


\begin{proof}
When $\bnu(\btheta) = \bm{W}\btheta$, as shown in \cite{guerrier2020wavelet}, the GMWM has an explicit solution given by
$$\bar{\bvartheta} = (\bm{W}^{\intercal}\bm{\Omega}\bm{W})^{-1}\bm{W}^{\intercal}\bm{\Omega}\hbnu.$$
\begin{figure*}[!ht]
\begin{minipage}[b]{0.475\linewidth}
\centering
\includegraphics[width=\linewidth]{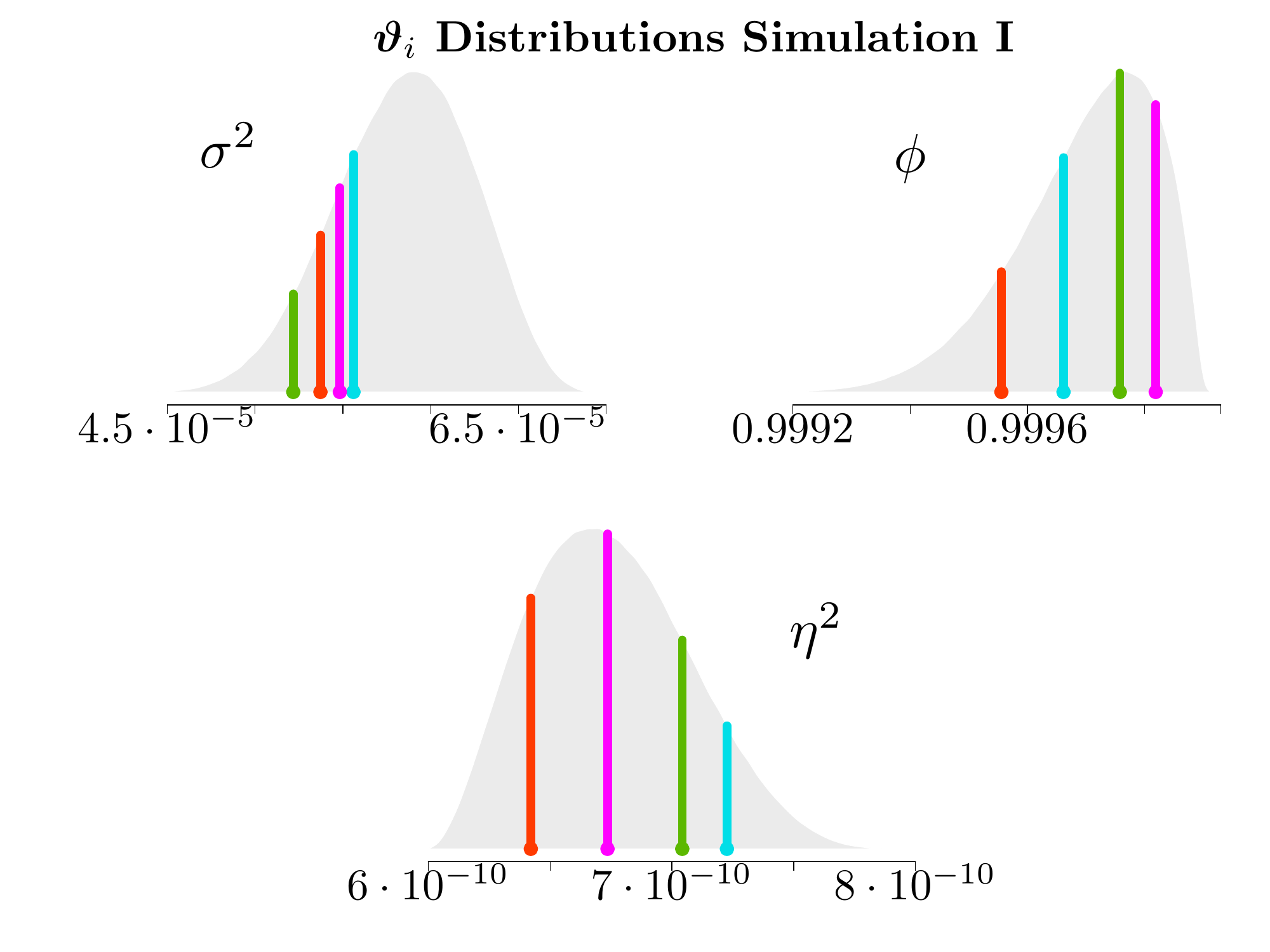}
\end{minipage}\hfill
\begin{minipage}[b]{0.475\linewidth}
\centering
\includegraphics[width=\linewidth]{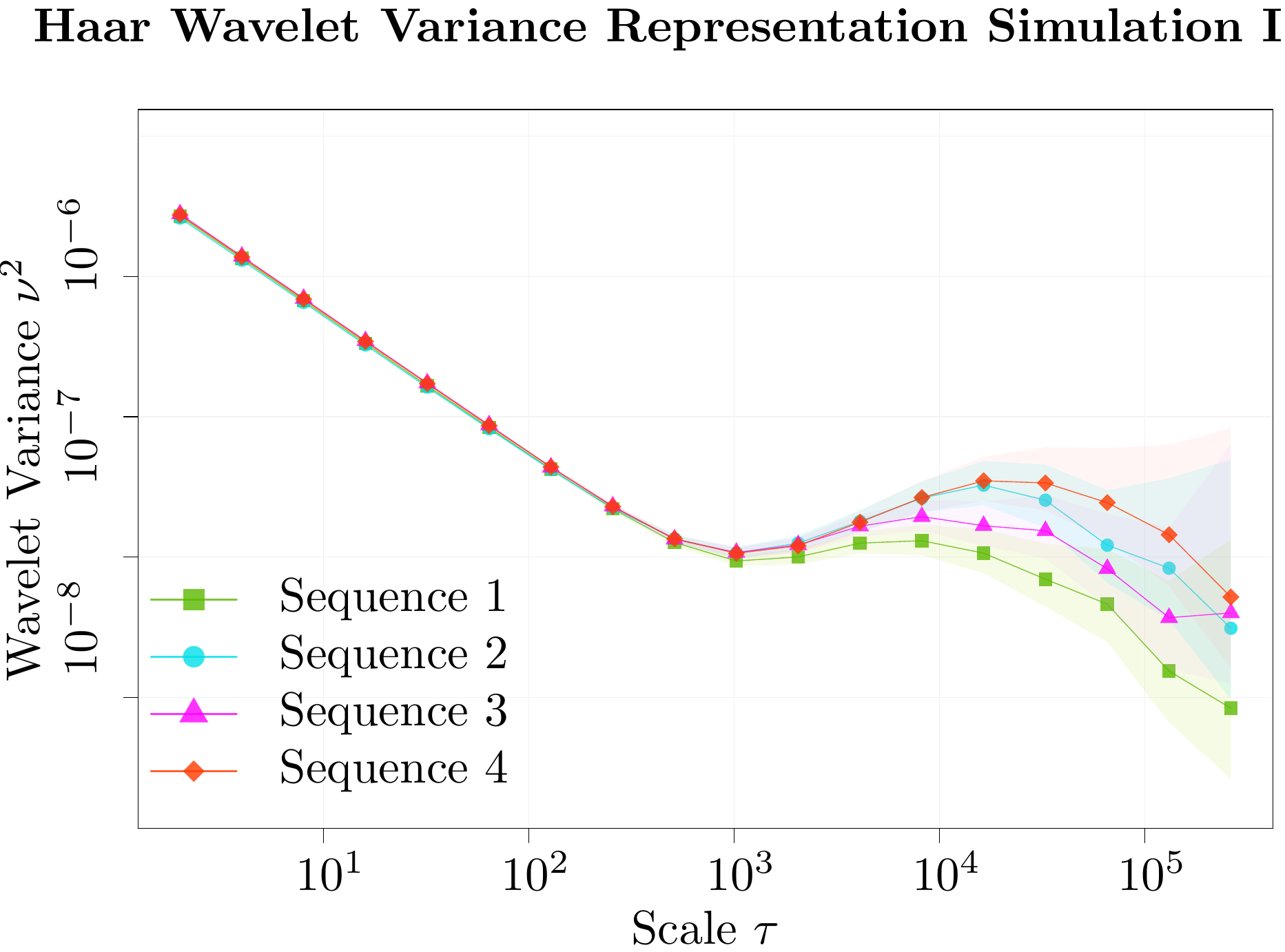}
\end{minipage}\hfill
\caption{Left: Marginal densities of the internal sensor model $G$ for the parameters $\sigma_i^2$, $\phi_i$ and $\eta_i^2$ considered in the Simulation I setting (WN + AR1) with horizontal colored lines representing four randomly selected values from each density. Right: WV plots and 95\% confidence intervals for the empirical WV of the signals generated by the parameter values selected from the respective densities in the top part (each color in the top part corresponds to the color of the WV in the bottom part).}
\label{fig_wv_simu1}
\end{figure*}
Hence, in this case the AGMWM estimator can be expressed as 
\begin{eqnarray*}
\hbtheta^{\circ} &=& \sum_{i=1}^K w_i \tilde{\bvartheta}_i\\
&=& \sum_{i=1}^K w_i (\bm{W}^{\intercal}\bm{\Omega}\bm{W})^{-1}\bm{W}^{\intercal}\bm{\Omega}\hbnu_i\\
&=& (\bm{W}^{\intercal}\bm{\Omega}\bm{W})^{-1}\bm{W}^{\intercal}\bm{\Omega}\underbrace{\sum_{i=1}^K w_i \hbnu_i}_{\bar{\bnu}}.
\end{eqnarray*}
Now, based on the proof in Proposition \ref{prop.equality}, we have that $\hbtheta^{\dagger}$ is the solution in $\btheta$ of the following equation
$$2 \frac{\partial}{\partial \btheta} \bnu(\btheta) \bm{\Omega}\left(\underbrace{\sum_{i = 1}^K w_i \hbnu_i}_{\bar{\bnu}} - \bnu(\btheta)\right) = 0,$$
which, in the case where $\bnu(\btheta) = \bm{W}(\btheta)$, delivers
$$2\bm{W}^{\intercal}\bm{\Omega}\left(\bar{\bnu} - \bm{W}\btheta\right) = 0.$$
The solution is therefore given by 
$$\hbtheta^{\dagger} = (\bm{W}^{\intercal}\bm{\Omega}\bm{W})^{-1}\bm{W}^{\intercal}\bm{\Omega}\bar{\bnu},$$
which is the same as for the AGMWM and, based on Proposition \ref{prop.equality}, the same as for the MS-GMWM.
\end{proof}
\vspace{0.2cm}

This last result therefore states that, whenever the process underlying the signals delivers a theoretical WV which is linear in the parameters of interest, the parameter $\btheta_0$ can be estimated with any of the three solutions considered in this work, including the AGMWM. Examples of such processes are the white noise, quantization noise, random walk and drift, or a combination thereof.

\subsection{Discussion}

The three solutions considered in this work therefore all have appropriate asymptotic properties under the stated assumptions. However, these results show that the AGMWM, considered in \cite{bakalli2017computational}, is not generally adequate if one intends to target the quantity $\btheta_0$ unless the processes underlying the signals have a linear WV, which may not always be the case since often the signals are characterized by autoregressive (or Gauss-Markov) processes whose WV are not linear in the parameters. Indeed, the AGMWM targets the expected value of the internal sensor model which may not be the optimal quantity to use within a navigation filter for prediction purposes. On the other hand, the other two estimators (AWV and MS-GMWM) have been proven to be the same and can therefore be used interchangeably to estimate the parameter of interest $\btheta_0$. The only arguments in favor of choosing one of the latter estimators over the other are practical in nature. More precisely, the AWV has a practical advantage from an implementation perspective since it can directly rely on the current GMWM framework replacing the single WV vector with the weighted average $\sum_{i=1}^K w_i \hbnu_i$. By doing so, it can directly make use of existing starting-value algorithms while it is not immediate to apply the same algorithms for the MS-GMWM. For the latter estimator, one could for example use the AGMWM as a starting value for optimization but it would require extra orders of computations (i.e., AGMWM as a first step) and, since the AGMWM targets $\btheta^{\circ} = \mathbb{E}[\bvartheta_i]$, may not be a close enough starting value.

The choice of the matrix $\bm{\Omega}$ may not be completely obvious in the stochastic framework considered in this work. If one chooses an estimator $\widehat{\bm{\Omega}}$ then, in the standard single replicate setting, one can choose the inverse of the estimated covariance matrix of the empirical WV, i.e., $\widehat{\bm{\Omega}} := \hat{\mathbb{V}}[\hbnu_i]^{-1}$, or a diagonal matrix proportional to the latter. Since the matrix $\bm{\Omega}$ only affects the asymptotic efficiency of the resulting estimator and does not affect the consistency as long as it is positive definite, then one could choose the following matrix:
$$\widehat{\bm{\Omega}}_K := \sum^K_{i = 1} w_i \widehat{\bm{\Omega}}_i,$$
where $\widehat{\bm{\Omega}}_i$ represents the estimator for $\bm{\Omega}_0$ for the $i^{th}$ replicate. The weighted average of the matrices that would be used on the individual replicates is indeed a valid choice and, for this reason, is what is going to be used in the next applied sections.

As a final note, these multi-signal approaches would be valid also in a setting where the model parameters do not vary between replicates (i.e., the internal sensor model $G$ is a Dirac distribution) and would probably benefit from greater asymptotic efficiency, compared to methods applied to a single replicate, due to their averaging nature. Moreover, these new theoretical results can provide additional support to the improvement of inferential tools proposed for the near-stationary setting considered in this work. Indeed, \cite{bakalli2017computational} and \cite{radi2019multisignal} also suggested a multi-signal near-stationary test to determine whether the process parameters changed between replicates. In the latter studies, a bootstrap distribution is derived for the test statistic but the results on asymptotic normality of the estimators studied in this work could allow to make use of a more computationally efficient $\chi^2$-test for this purpose (however this is left for future research). In addition, the results on asymptotic normality of these methods allow for the use of time-dependent bootstrap methods (such as the moving block bootstrap) to estimate the corresponding asymptotic covariance matrices.

\section{Simulation Studies}
\label{sec:simu}

In this section we provide further support to the results presented in Sec. \ref{sec:ms} by studying the finite sample performance of the suggested solutions. In fact, based on these results, we only compare two of the considered solutions, namely the AGMWM and AWV (since the MS-GMWM is equivalent to the latter). To do so we perform simulation studies based on composite stochastic processes that often characterize the stochastic signals from inertial sensor measurements. The first is a relatively common example consisting in the sum of a White Noise (WN) process with a first-order AutoRegressive (AR1) process (the latter consisting in a re-parametrization of a Gauss-Markov process), while the second consists in a sum of these two processes with the addition of a Random Walk (RW). In this second simulation setting, we therefore also consider the presence of non-stationary processes in the error signals also commonly found in stochastic signal calibration.

\begin{figure}[ht!]
	\centering		
	\includegraphics[width = \linewidth]{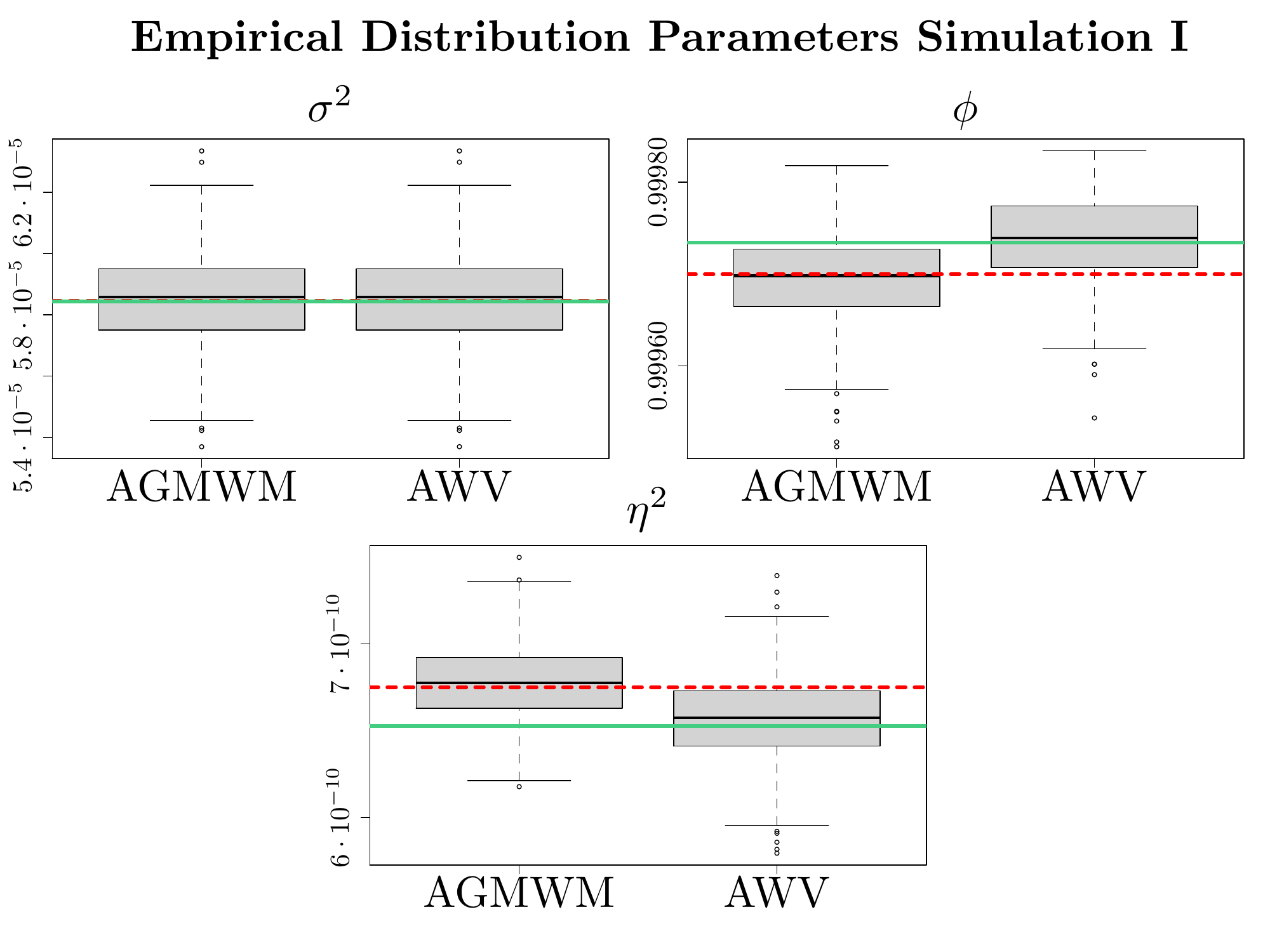}		
	\caption{Empirical distributions of the AGMWM (left boxplot) and AWV (right boxplot) for the parameters of the stochastic error model (WN + AR1) of Simulation I ($K = 6$ and $T = 10^6$). The red dashed line represents the parameter value $\btheta^\circ$, while the full green line represents $\btheta_0$.}
   \label{fig:boxplot1}
\end{figure}

In order to generate settings that closely resemble the WV plots that are observed in stochastic calibration sessions, we choose to represent the internal sensor model $G$ through independent and rescaled Beta distributions (i.e., each element of the parameter vector $\bvartheta_i$ comes from a separate rescaled Beta distribution). In addition, we choose to study the estimators in a setting where we observe $K = 6$ replicates which all have the same length, i.e., $T_i = T = 10^6$ (for all $i$), thereby delivering $J = 13$. Moreover, we choose $\bm{\Omega}$ by taking the average of the individual matrices for each replicate as discussed at the end of Sec. \ref{sec:ms}. We repeat this setting $B = 500$ times to investigate the empirical distribution of the estimators studied. Finally, to be able to understand if the estimators are targeting the correct values, we compute the value $\btheta_0$ via numerical simulations by minimizing $Q(\btheta)$ given in \eqref{eq:optimality} based on $K = 10^3$ values of $\bvartheta_i$ randomly generated from the chosen internal sensor model $G$, while $\btheta^\circ$ is computed for each element of $\bvartheta_i$ based on its corresponding distribution.
\begin{figure*}[!ht]
\begin{minipage}[b]{0.475\linewidth}
\centering
\includegraphics[width=\linewidth]{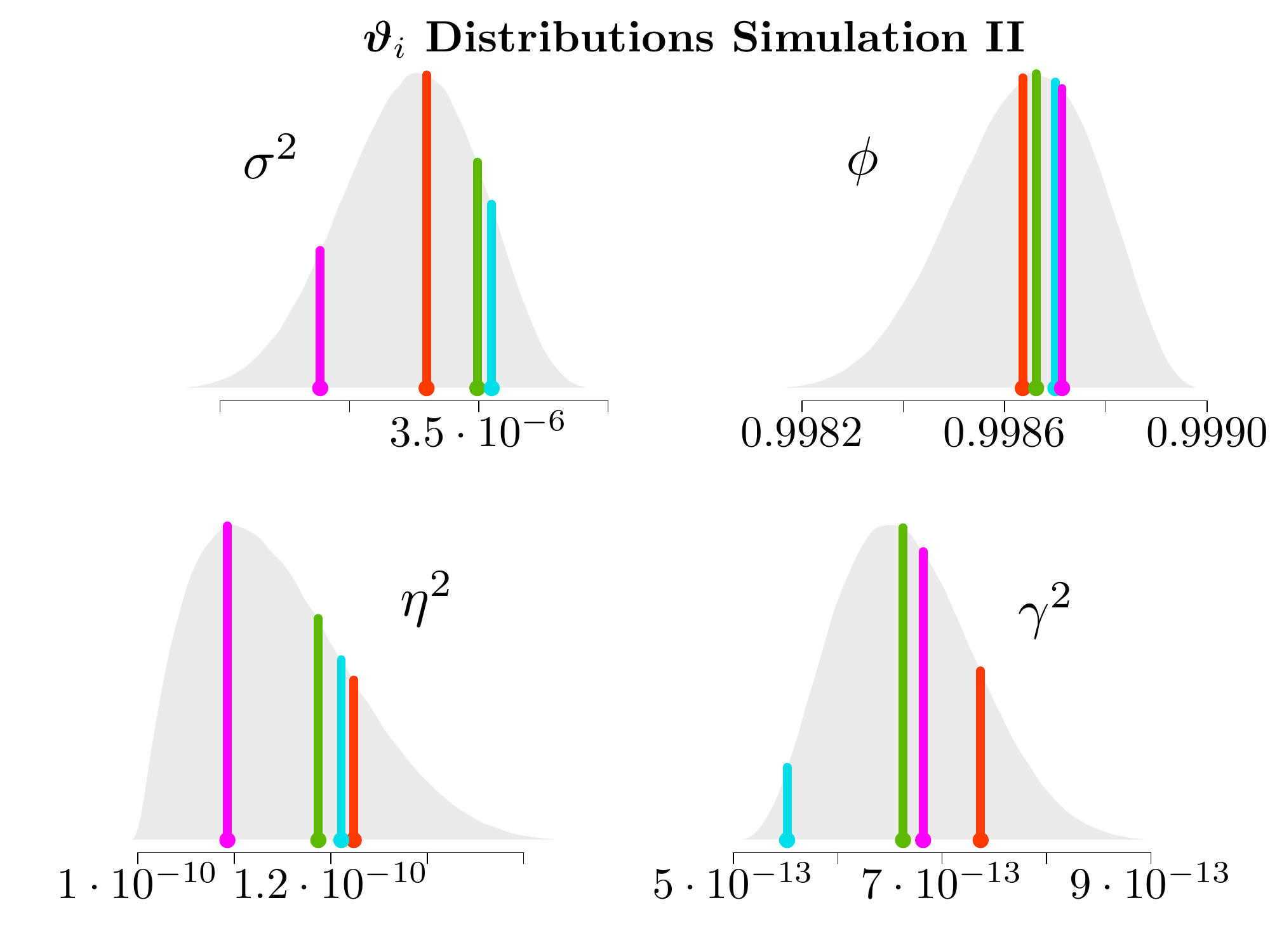}
\end{minipage}\hfill
\begin{minipage}[b]{0.475\linewidth}
\centering
\includegraphics[width=\linewidth]{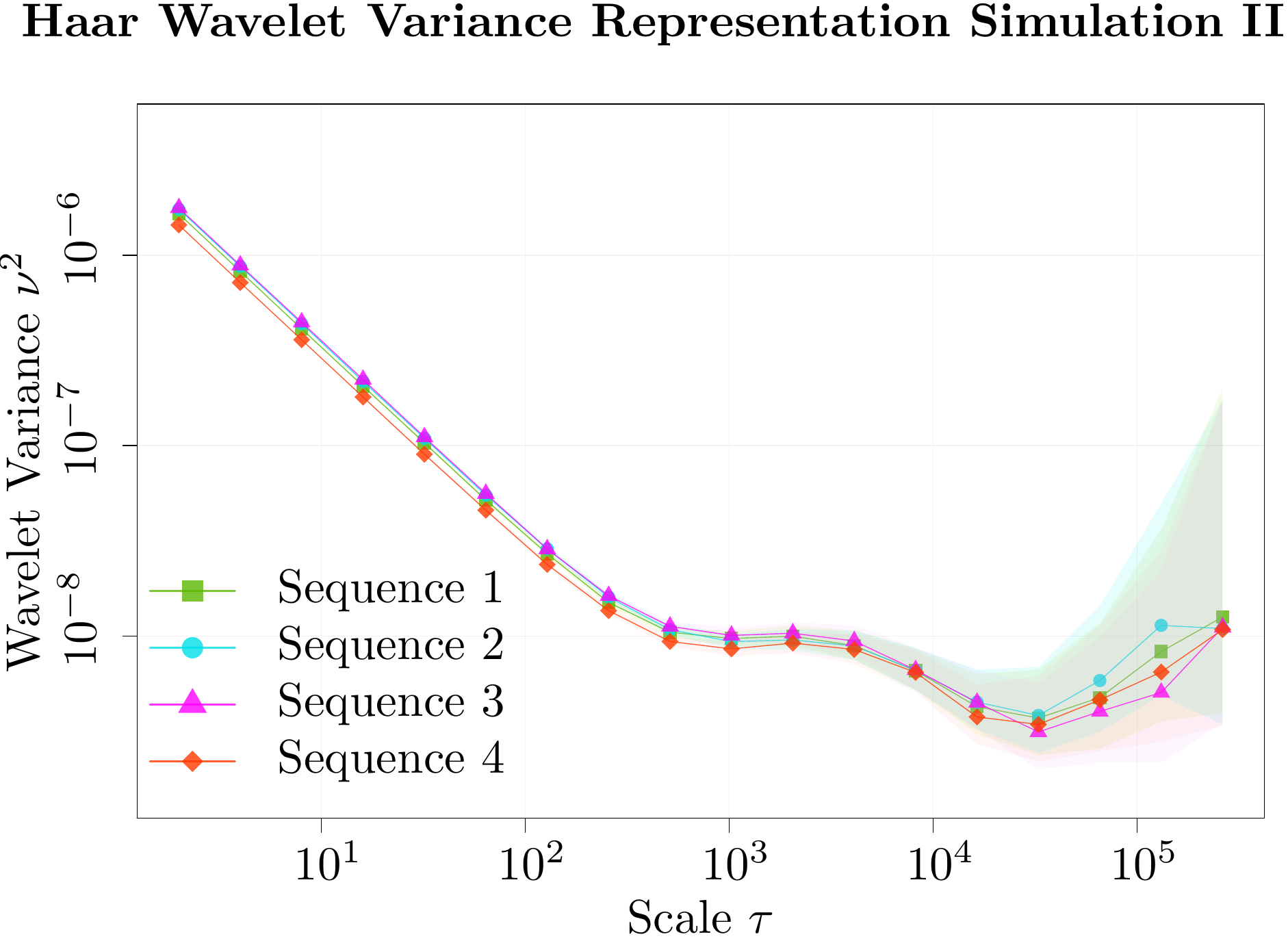}
\end{minipage}\hfill
\caption{Left: Marginal densities of the internal sensor model $G$ for the parameters $\sigma_i^2$, $\phi_i$, $\eta_i^2$ and $\gamma_i^2$ considered in the Simulation II setting (WN + AR1 + RW) with horizontal colored lines representing four randomly selected values from each density. Right: WV plots and 95\% confidence intervals for the empirical WV of the signals generated by the parameter values selected from the respective densities in the top part (each color in the top part corresponds to the color of the WV in the bottom part).}
\label{fig_wv_simu2}
\end{figure*}

\subsection{Simulation I}

\noindent For the first simulation, the parameter vector for the $i^{th}$ replicate is defined as follows $\bvartheta_i := [\sigma_i^2, \, \phi_i, \, \eta_i^2]$, where $\sigma_i^2$ represents the WN parameter, $\phi_i$ is the autoregressive parameter of the AR1, and $\eta_i^2$ is the innovation variance parameter of the AR1. In the near-stationary setting, we therefore have that $\bvartheta_i \sim G$ which we choose as follows:
\begin{itemize}
	\item $\sigma_i^2 = 4\cdot10^{-5} + Y_i^{(1)}(7\cdot10^{-5}  - 4\cdot10^{-5} )$, where \\$Y_i^{(1)} \sim \text{Beta}(8,5)$,
	\item $\phi_i = 9.99\cdot10^{-1} + Y_i^{(2)}(9.999\cdot10^{-1}  - 9.99\cdot10^{-1} )$, where $Y_i^{(2)} \sim \text{Beta}(7,2)$ ,
	\item $\eta_i^2 =  6\cdot10^{-10} + Y_i^{(3)}(8\cdot10^{-10}  - 6\cdot10^{-10} )$, where $Y_i^{(3)} \sim \text{Beta}(3,5)$.
\end{itemize}
An insight into the described simulation setting is given in Fig. \ref{fig_wv_simu1} where in the left part we can observe the rescaled Beta density functions (grey surfaces) from which we generate the respective parameter values that compose $\bvartheta_i$. Hence, the internal sensor model $G$ is the multivariate distribution composed of independent variables $\sigma_i^2$, $\phi_i$ and $\eta_i^2$. The vertical colored lines represent randomly sampled values for the parameters following their respective distributions where common colors indicate those values that were generated jointly to deliver four different values of $\bvartheta_i$. These colors are then used to represent the empirical WV computed on signals generated from each value of $\bvartheta_i$ which can be seen in the right part of Fig. \ref{fig_wv_simu1}. We can notice how the different WVs are extremely close at the first scales and then differ at the larger scales. This plot is very similar to those seen in many applied settings as shown in Sec. \ref{sec:case2}.
When applying the estimators to the setting described above, we observe the results shown in Fig. \ref{fig:boxplot1}. The red dashed line represents the true value of $\btheta^\circ$ and the full green line represents the (approximated) value of interest $\btheta_0$. The boxplots represent the empirical distribution of the estimated parameter values for the AGMWM (left boxplot) and AWV (right boxplot) respectively. While all boxplots appear to support the results on asymptotic normality of the estimators derived in Sec. \ref{sec:ms}, it can be observed that the corresponding elements of $\btheta^\circ$ and $\btheta_0$ appear to differ (especially for the AR1 process which is non-linear in the WV). As a result of these differences, it is also obvious to detect how the two estimators target these different quantities since the AGMWM is centered around the red dashed line ($\btheta^\circ$) and the AWV around the full green line ($\btheta_0$). This therefore supports the consistency results in Sec. \ref{sec:ms} which indeed state that these estimators target these respective quantities.

\subsection{Simulation II}
\noindent As mentioned at the start of this section, we perform a second simulation study in a similar way to the first one but, in this case, we add a RW process to the other two. This implies that the generated signals are non-stationary which is in fact the case for many stochastic error signals issued from inertial calibration sessions. For this simulation, we have that $\bvartheta_i := [\sigma_i^2, \, \phi_i, \, \eta_i^2, \, \gamma_i^2]$ where, in addition to the parameters specified in the previous simulation, $\gamma_i^2$ represents the parameter of the RW process. The internal sensor model is composed of the following random parameter distributions:
\begin{itemize}
	\item $\sigma_i^2 = 2\cdot10^{-6} + Y_i^{(1)}(4\cdot10^{-6}  - 2\cdot10^{-6} )$, where $Y_i^{(1)} \sim \text{Beta}(8,5)$,
	\item $\phi_i = 9.98\cdot10^{-1} + Y_i^{(2)}(9.99\cdot10^{-1}  - 9.98\cdot10^{-1} )$, where $Y_i^{(2)} \sim \text{Beta}(7,4)$,
	\item $\eta_i^2 =  1\cdot10^{-10} + Y_i^{(3)}(1.5\cdot10^{-10}  - 1\cdot10^{-10} )$, where $Y_i^{(3)} \sim \text{Beta}(3,5)$;
	\item $\gamma^2 = 0.5\cdot10^{-12} + Y_i^{(4)}(1\cdot10^{-12}  - 0.5\cdot10^{-12} )$, where $Y_i^{(4)} \sim \text{Beta}(4,8)$;
\end{itemize}
\begin{figure}[b]
	\centering		
	\includegraphics[width = \linewidth]{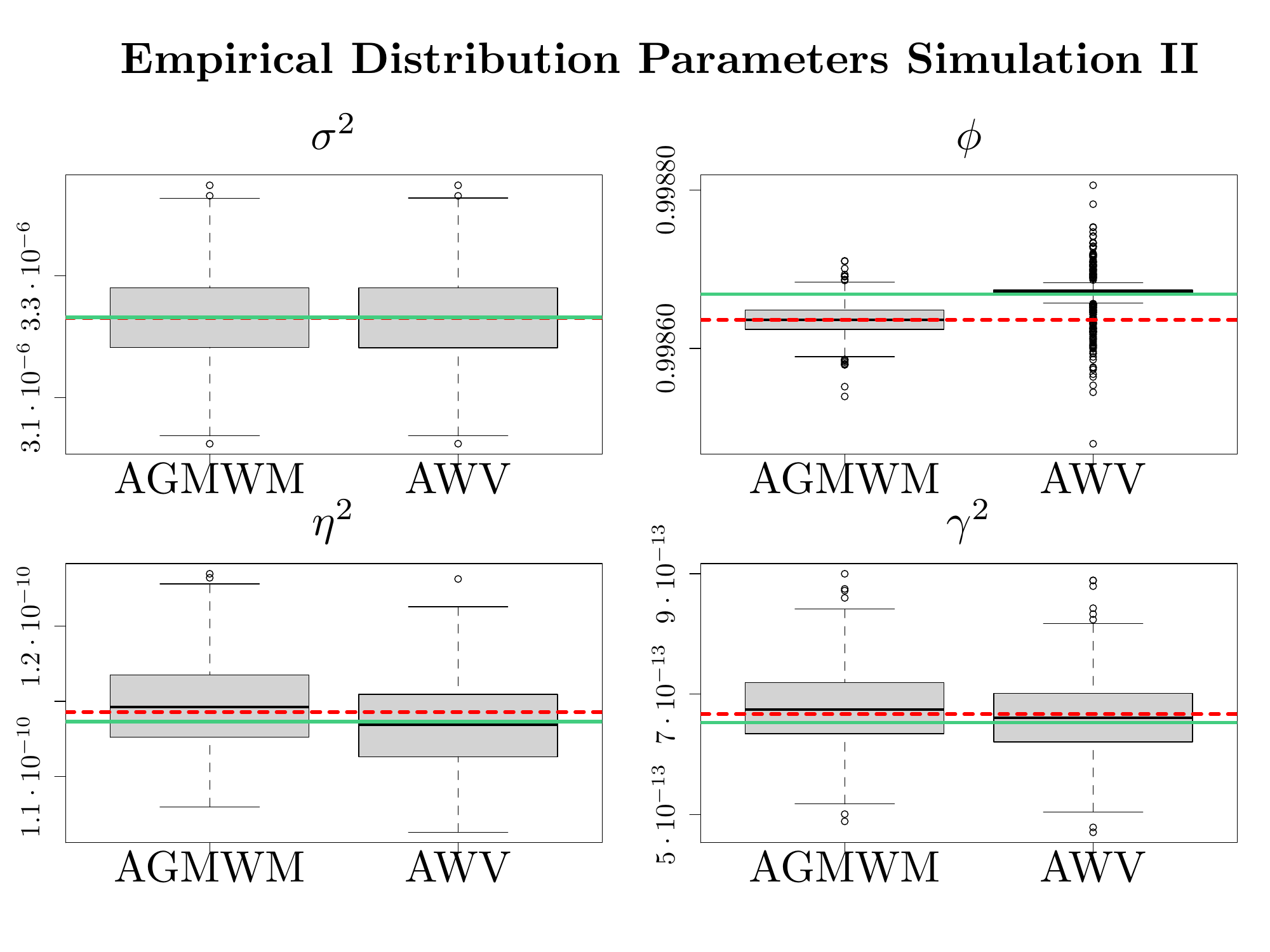}		
	\caption{Empirical distributions of the AGMWM (left boxplot) and AWV (right boxplot) for the parameters of the stochastic error model (WN + AR1 + RW) of Simulation II ($K = 6$ and $T = 10^6$). The red dashed line represents the parameter value $\btheta^\circ$, while the full green line represents $\btheta_0$.}
   \label{fig:boxplot2}
\end{figure}

Again, to give a visual support to the setting of this simulation, we provide an example of the parameter distributions (grey areas) along with four randomly sampled values for $\bvartheta_i$ represented by the four different colors in Fig.~\ref{fig_wv_simu2} left panels. Also in this case, it is possible to notice how the empirical WV generated from these different parameter values differ across the scales and, it can also be seen how some can be significantly different from the others at the first scales as highlighted by the non-overlapping confidence intervals of the respective WVs (shaded areas in the WV plot).
In a similar manner to the first simulation we represent the results when applying the two considered estimators to this near-stationary setting. These results, represented in Fig. \ref{fig:boxplot2}, confirm the conclusions made in the first simulation where both estimators appear normally distributed and both target their respective values of reference, i.e., $\btheta^\circ$ for the AGMWM and $\btheta_0$ for the AWV.
Having given empirical support to the conclusions made in Sec. \ref{sec:ms}, we now study how these conclusions deliver advantages in applied cases. In the next section, we therefore study the results in terms of navigation performance when using the AWV estimator which targets the value of interest $\btheta_0$. 

\section{Case Study - Impact on Navigation}
\label{sec:case2}
The purpose of this section is to compare how navigation performances change when estimating stochastic models for the inertial sensors using a single replicate of the calibration data (as it is currently done), based on the GMWM, or using all replicates jointly based on the AWV estimator put forward in this work.

We collect static measurements from a Bosch Sensortec BMI085 6-Axis IMU \footnote{The Bosch Sensortec BMI085: \url{https://www.bosch-sensortec.com/products/motion-sensors/imus/bmi085/}}, a low-cost MEMs IMU (\hbox{$< 5$} USD per unit, when purchased in volumes) for navigation applications, e.g., in UAVs. Such an inertial module combines a 3-axis gyroscope and a 3-axis accelerometer. We collect $K=16$ replicates of sensor data in static conditions at $20$~${}^\circ$C in a temperature controlled chamber, each one  lasting $12$ hours. Since the sensor is static, the acquired data consists of samples of the noise processes only. 
The sensor runs at a frequency of $200$ Hz, thus each error signal contains approximately $8.5$ million sample points. We focus on the error signals from the X-axis gyroscope and accelerometer. 

To identify the error process we visually analyse the empirical WV of eight sequences that we consider for training purposes (i.e. used to estimate the model parameters), while leaving the remaining eight for validation, as discussed later on. The empirical WV of the training sequences are shown in Fig.~\ref{fig:emp_hexa}. We observe that the considered devices are characterized by a non-negligible bias-instability, as it can be seen from the relatively flat part of the WV at the larger scales. This behaviour is common in low-cost inertial sensors and it is typically modeled with a sum of first order auto-regressive processes~(AR1), or equivalently, first order Gauss-Markov processes, as suggested for example in~\cite{yuksel2011notes, ieee1998ieee}. We find that three AR1 processes are well suited to model each training sequence for the gyroscopes, and four for the accelerometers, respectively. We note that in both cases one of such AR1 processes always has a very short correlation time, far smaller than $1$ s. This process models the intrinsic bandwidth limitation of the sensor (visible in the elbow at the first two scales of the WV) and is typically replaced with a white noise (an Angular/Velocity Random Walk) in practice. We estimate one model separately on each sequence in the training set, obtaining models $\mathcal{M}_i$, with $i \in [1, ..., 8]$. Next, we apply the AWV method proposed in this work employing all eight training sequences together, obtaining the model denoted as $\mathcal{M}_\text{MS}$. The estimated training models $\mathcal {M}_i$ appear to adequately fit the empirical WV of their respective training sequence, thus supporting the choice of the general model  (an example consisting in the first training sequence is provided in Fig.~\ref{fig:fit_seq_1}). The fits for each sequence are given in Appendix~{\color{red}A}  Fig. {\color{red}A.1}  and  {\color{red}A.2}. 

Each fitted model lies within the confidence intervals of the empirical WV. Considering these representations, it is straightforward to detect differences in the models fit to the signals via the individual and joint approaches. Given this, in order to confirm whether to use a single replicate or a multi-signal approach we perform the near-stationarity test put forward in \cite{bakalli2017computational} by simulating 100 bootstrap replicates under the estimated $F_{\hbtheta^{\dagger}}$ which, keeping in mind the discrete nature of the bootstrapped test statistic, gives us a zero p-value thereby allowing us to reject the null hypothesis that all replicates are issued from the same data-generating process with $\bvartheta_i = \theta_0$ for all $i$ (i.e. $G$ is a Dirac point mass distribution).  The estimated parameters of the models $\mathcal{M}_i$ are included in Appendix~\ref{sec:distrib}, Fig.\ref{fig:params_acc} and \ref{fig:params_gyro}. We note that a substantial variability can be observed within the latter fits and that, as expected, the parameters obtained with the AWV method do not correspond to their mean.

We investigate the navigation performance on the $8+1$ different models. The estimated stochastic models are used to configure an Extended Kalman Filter (EKF) for INS/GNSS navigation~\cite{titterton2004strapdown}. This filter fuses inertial and GNSS readings, leveraging on the provided stochastic models, to estimate the vehicle navigation states (position, velocity and orientation). It allows us to compare the performance of the different models available for the inertial sensor in terms of position and orientation errors as well as consistency of the confidence intervals for the navigation states within a realistic navigation scenario. We consider a ground-truth trajectory typical of a small fixed-wing Unmanned Aerial Vehicle~(UAV) performing an aerial mapping mission. A $30$s GNSS outage period is considered after $9.5$ minutes. All the true kinematic properties of the sensors are known (position, velocity, etc.) from the reference trajectory and they are used to generate synthetic, noise-free sensor readings for both the inertial and the GNSS sensors. Realistic noisy readings are then generated for the inertial sensors by adding samples from the noise replicates collected during static acquisitions to the synthetic noise-free readings. Here, we employ the remaining eight static data sequences we collected and that were never used in the previously described stochastic calibration step. As for the GNSS readings, the added noise is WN with standard deviation $2.5$ cm, which corresponds to the assumed uncertainty carrier-phase differential of GNSS typically employed in mapping missions.
\begin{figure}[t]
    \centering
    \includegraphics[width=\linewidth]{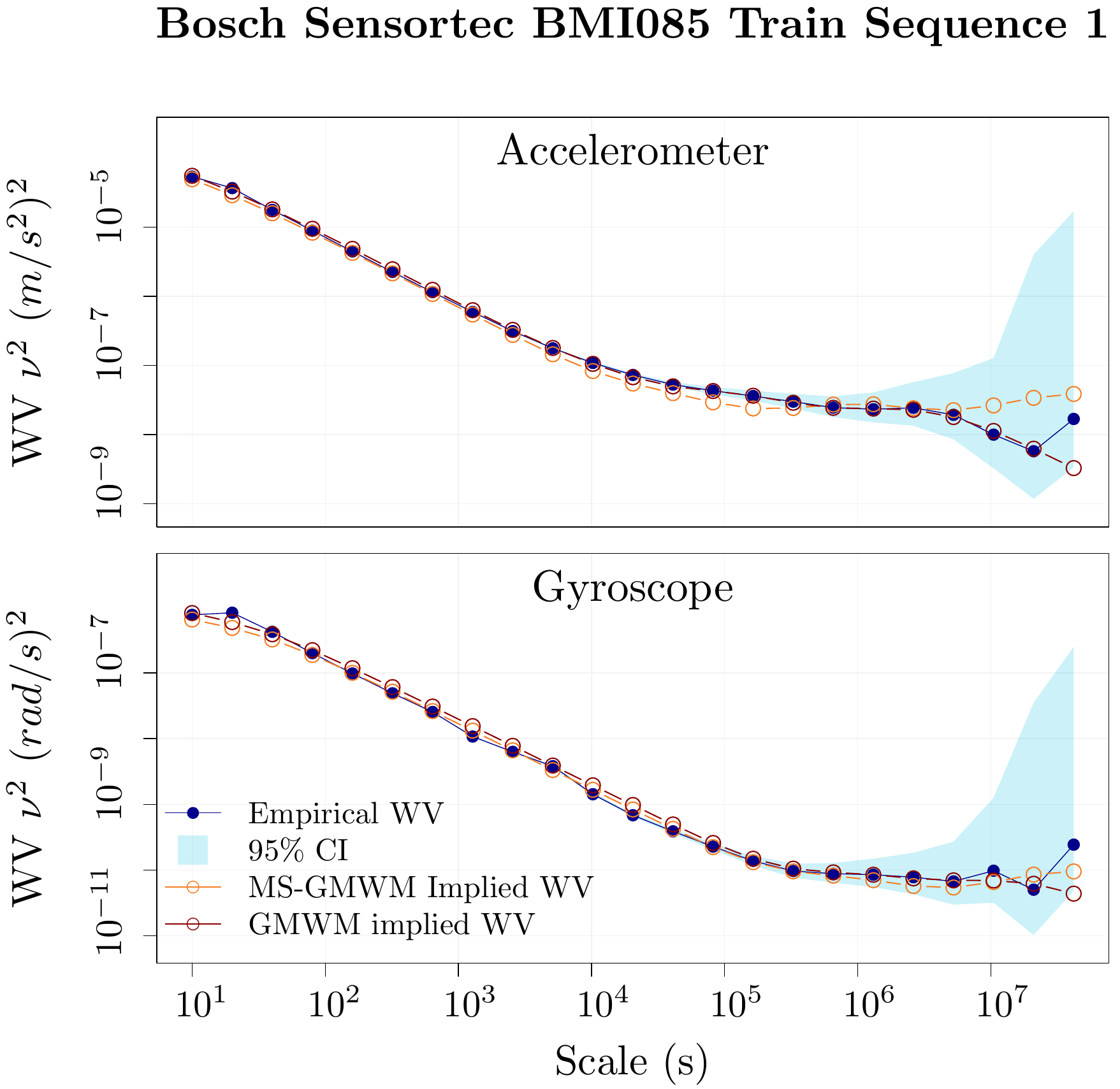}
    \caption{Empirical WV (blue doted line) for the first training sequence of accelerometer and gyroscope of a Bosch Sensortec BMI085 6-Axis IMU and their respective $95\%$ confidence intervals (blue shaded area). Red dotted lines represent the implied WV from the individual solution of the GMWM on this first sequence, while orange dotted lines represent the WV implied by the MS-GMWM computed trough the AWV.  }
    \label{fig:fit_seq_1}
\end{figure}

A forward navigation solution is computed using an EKF from the noisy sensor readings. We consider $9 \times 8 = 72$ different cases in which the EKF is configured to use one of the $8+1 = 9$ model sets fitted on the static acquisition replicates, while the noise data corrupting inertial readings comes from one of the  eight different static acquisition  sequences kept for validation, each time considering a different, continuous chunk of data.

The $250$ solutions for each case are aggregated and compared in terms of relative position and orientation error and consistency of the confidence intervals computed by the EKF: we compute $50$\% confidence intervals (approximately corresponding to the common choice of $\pm \sigma$ intervals) from the navigation state covariance matrix estimated by the EKF and we count how many times the true navigation states (from the reference trajectory) fall within such confidence intervals.  Note that it is equivalent to check whether the Average Normalised Estimation Error Squared~(ANEES), as defined for example in~\cite[Chapter 3.7.4]{bar2004estimation}, falls within its expected bounds, and it allows to quantify whether the employed stochastic models for the inertial errors lead to a over- or under-confident estimation of the navigation state uncertainty. The position and orientation error and the coverage metrics are evaluated each $0.5$s in the last $15$s of the GNSS outage period to better highlight their evolution when the navigation filter works in standalone mode, e.g., relying only on inertial data (represented in Fig.~\ref{fig:metrics_evaluation}). The results are presented in Fig.~\ref{fig:navigation}.

\begin{figure}[t]
    \centering
    \includegraphics[width=\linewidth]{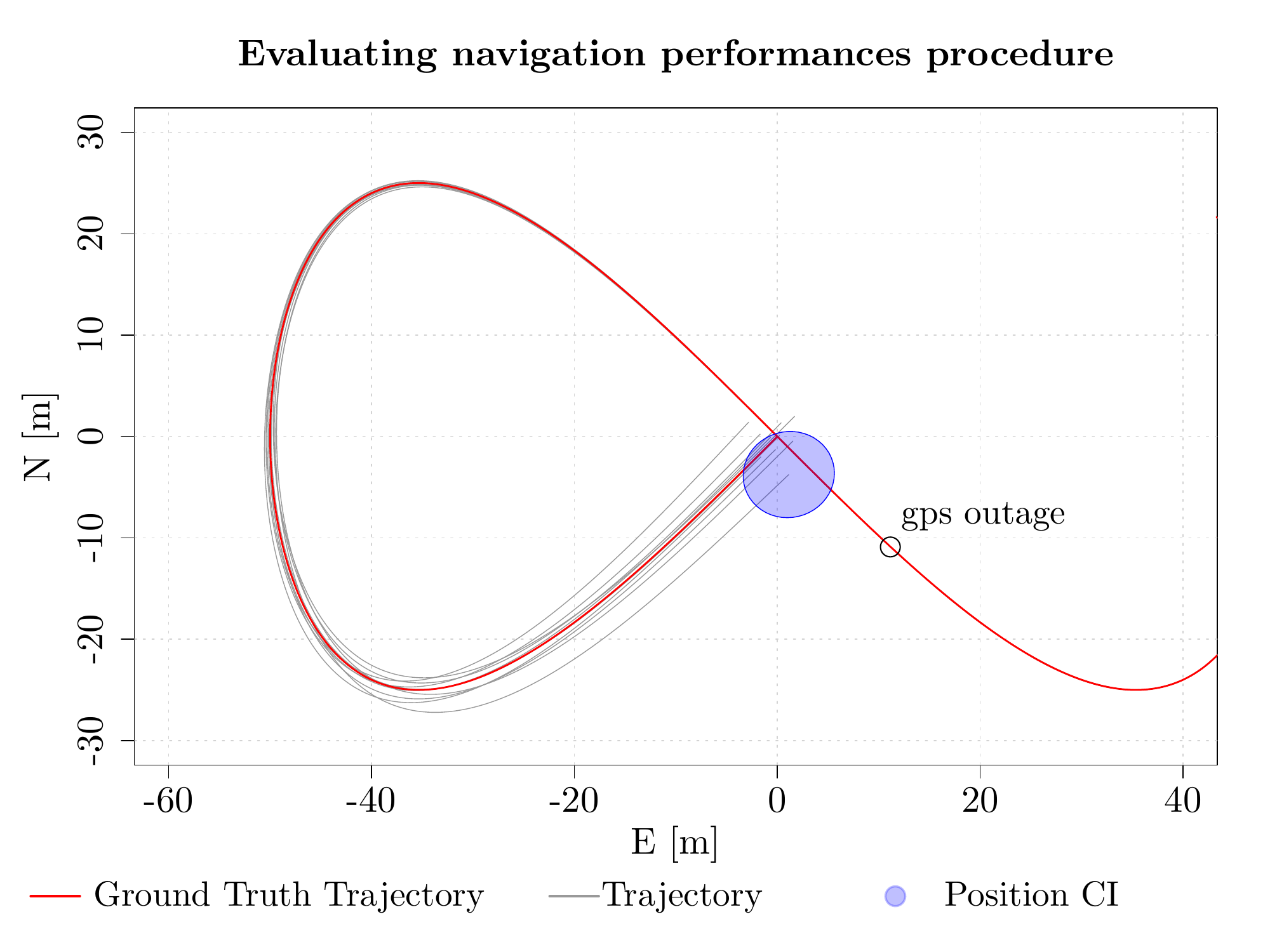}
    \caption{Procedure for evaluating navigation performances. An EKF estimates the trajectory of an UAV in $250$ Monte-Carlo runs (a few of them are represented through the thin dark lines). The GNSS position and velocity information are no longer available after the marked point (gps outage). The position and orientation error and the coverage of the uncertainty of the navigation states (blue circle), as estimated with by the EKF, are assessed based on the ground truth trajectory (red line) during the last $15$s of the GNSS outage period. }
    \label{fig:metrics_evaluation}
\end{figure}

It is possible to see that the differences in position and orientation error, computed in percentage with respect to the best performing model, vary up to $5$ \% depending on which stochastic model is selected for the inertial sensor.  These differences may seem small, but attitude quality improvement is proportional to the square (or even the cube) of the IMU size and weight (as well as cost). The differences in coverage are much more significant: when computing a confidence interval for position and orientation with level $\alpha = 0.5$ (50\%), we find that the empirical coverage of certain models fit on a single sequence, e.g., $\mathcal{M}_4$ and $\mathcal{M}_8$, is as low as $10$ \% or as high as $90$ \% in some cases. This implies that, when configured with such models, the EKF is largely over- or under-confident in the estimation of the uncertainty of the navigation states. Even though the actual errors in such states remain relatively small, the quantification of their uncertainty is substantially unreliable which prevents, for example, proper decision making in safety-critical navigation applications, or consistent information fusion in more complex scenarios such as simultaneous localisation and mapping, where further sensor information (e.g., from cameras) need to be taken into account. On the other hand, the model estimated with one of the methods put forward and studied in this work, $\mathcal{M}_{\text{MS}}$, achieves almost optimal position and orientation performances, while at the same time providing a reliable and correct uncertainty quantification of the position and orientation estimates. We remark that \emph{by chance} one single sequence may lead to the estimation of a stochastic model which performs well in practice, but at the same time the opposite may hold, for example if training sequences $3$, $4$, or $8$ were to be selected. These results indicate that the AWV (or a multi-signal method) can deliver a more robust (stable) estimation of the stochastic models that underlie inertial sensor measurement errors, compensating for the intrinsic variability of the single realizations of calibration data.

\begin{figure*}
    \centering
    \includegraphics[width=\linewidth]{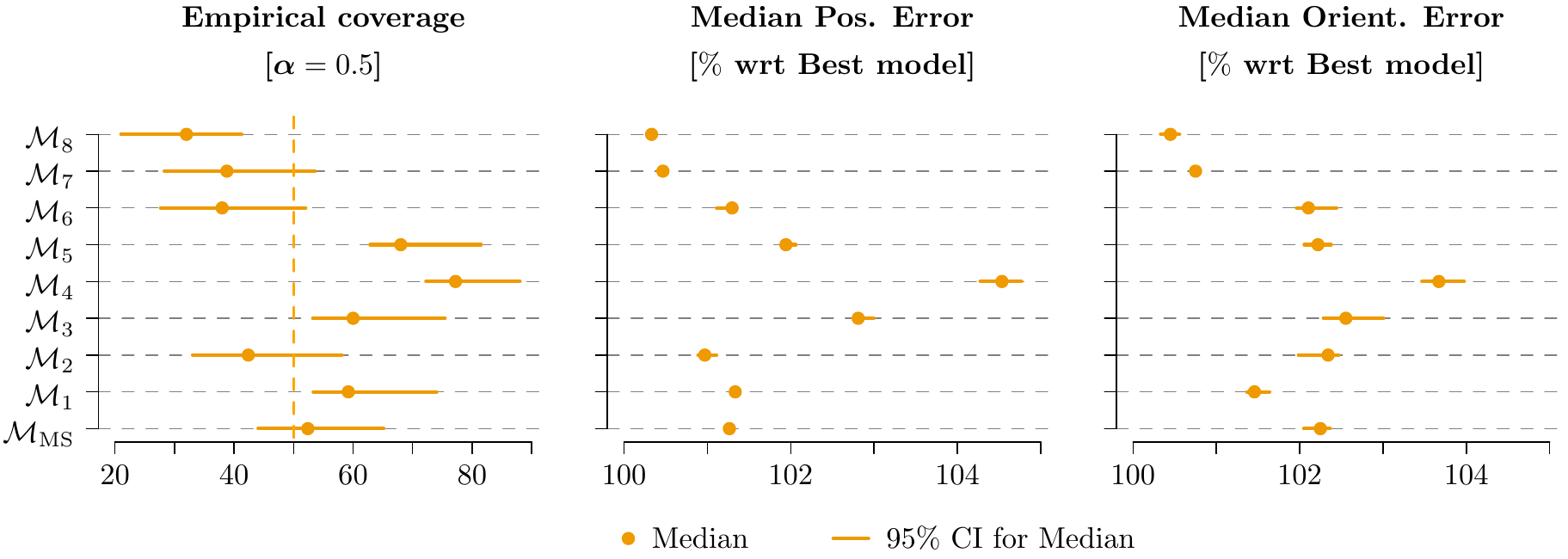}
    \caption{Empirical coverage of the $50$\% confidence intervals derived from the EKF covariance matrices, median position and orientation errors achieved by using each estimated model to predict the error on all replicates, $\mathcal{M}_i$ represents the model estimated on replicate $i$ and $\mathcal{M}_{\text{MS}}$ represents the model estimated via the multi-signal AWV. The results are expressed in percentage with respect the best performing model on one specific static acquisition.}
    \label{fig:navigation}
\end{figure*}

\section{Conclusions}
\label{sec:conclusion}

In this work, we studied methods and delivered further evidence for the need of a multi-signal approach when dealing with inertial sensor calibration. Indeed, in many practical settings, one can observe a near-stationary behavior of replicate IMU stochastic error signals which needs to be taken into account when performing estimation for model selection and construction of accurate navigation filters. Having compared different existing and new approaches to address this problem, we determined their asymptotic properties and their common features which were empirically supported in controlled simulation settings as well as in applied case study scenarios. In the latter case, this work also highlighted how the use of a single replicate to perform stochastic calibration may be a sub-optimal choice and confirmed that a multi-signal solution is the most appropriate in such settings. As a result of this work, it is now possible to select the most appropriate multi-signal calibration approach according to the goal of interest and consequently achieve improved navigation performance both in terms of accuracy as well as in terms of uncertainty quantification during navigation. Finally, this study can extend to all approaches based on moment-matching (e.g. Generalized Methods of Moments) beyond the WV and IMU calibration.

\section*{Acknowledgment}
We are grateful to M.-P. Victoria-Feser for her helpful comments. This work was supported in part by the SNSF Grant $\#100018-182582$, in part by the SNSF Professorships Grant $\#176843$ and by the Innosuisse-Boomerang Grant $\#37308.1$ IP-ENG.



\bibliographystyle{unsrt}
\bibliography{ref}

\end{document}